\newtheorem{theorem}{Theorem}[section]
\newtheorem{lemma}[theorem]{Lemma}
\newtheorem{remark}[theorem]{Remark}
\newenvironment{proof}{\noindent
  \textbf{Proof.}}{\hfill$\Box$\\}
\newcommand{\instr}[5]{\ensuremath{\hbox to 60 pt
    {${#1}$\hfil${#2}$\hfil$ \rightarrow
      $\hfil${#3}$\hfil${#4}$\hfil${#5}$}}}
\newcommand{\vp}{\ensuremath{\varphi}}
\newcommand{\union}{\, \cup \,}
\newcommand{\CTL}{{\bf CTL}}
\newcommand{\CTLstar}{\ensuremath{{\bf CTL}^{\bf *}}}
\newcommand{\con}{\wedge}
\newcommand{\dis}{\vee}
\newcommand{\imp}{\rightarrow}
\newcommand{\equivalence}{\leftrightarrow}
\newcommand{\bottom}{\perp}
\newcommand{\Var}{\textit{\texttt{Var}}}
\newcommand{\truth}{\ensuremath{\top}}
\newcommand{\falsehood}{\ensuremath{\bot}}
\newcommand{\cm}[1]{\ensuremath{\sf{#1}}}
\newcommand{\next}{\!\raisebox{-.2ex}{ 
            \mbox{\unitlength=0.9ex
            \begin{picture}(2,2)
            \linethickness{0.06ex}
            \put(1,1){\circle{2}} 
            \end{picture}}}       
            \,}
          \newcommand{\until}{\ensuremath{\hspace{2pt}\mathcal{U}}}
\newcommand{\mmodel}[1]{\ensuremath{\frak{#1}}}
\newcommand{\states}[1]{\ensuremath{\mathcal{#1}}}
\newcommand{\sat}[3]{\ensuremath{\frak{#1}, #2 \models #3}}
\newcommand{\notsat}[3]{\ensuremath{\mmodel{#1}, #2 \not\models #3}}
\newcommand{\ATL}{{\bf ATL}}
\newcommand{\ATLstar}{\ensuremath{{\bf ATL}^{\bf *}}}
\newcommand{\coal}[1]{\ensuremath{\langle \hspace{-2.2pt} \langle #1
    \rangle \hspace{-2.5pt} \rangle}}
\newcommand{\psmodel}[1]{\ensuremath{\mathcal{M}^*}}
\newcommand{\pseudomodel}[1]{\ensuremath{\mathcal{M}^{**}}}
\newcommand{\AX}{\ensuremath{\forall \next}}
\newcommand{\EX}{\ensuremath{\exists \next}}
\newcommand{\AG}{\ensuremath{\forall \Box}}
\newcommand{\ar}{\ensuremath{\longmapsto}}
\newcommand{\sameas}{\ensuremath{\leftrightharpoons}}
\providecommand{\keywords}[1]{\textsc{\textsc{Keywords:}} #1}
\begin{document}
\title{Complexity and Expressivity of Branching- \\ and
  Alternating-Time Temporal Logics \\ with Finitely Many
  Variables\footnote{Prefinal version of the paper published in Bernd
    Fischer and Tarmo Uustalu (eds.)  \textit{Theoretical Aspects of
      Computing -- ICTAC 2018}, Lecture Notes in Computer Science,
    Vol. 11187, Springer 2018, pp. 396--414. DOI
    \texttt{https://doi.org/10.1007/978-3-030-02508-3\_21}}}
    
%
\author[1]{Mikhail Rybakov}
\author[2]{Dmitry Shkatov}
\affil[1]{Tver State University and University of the Witwatersrand,
  Johannesburg}
\affil[2]{University of the Witwatersrand, Johannesburg}

\maketitle

\begin{abstract}
  We show that Branching-time temporal logics {\bf CTL} and
  ${\bf CTL}^\ast$, as well as Alternating-time temporal logics {\bf
    ATL} and ${\bf ATL}^\ast$, are as semantically expressive in the
  language with a single propositional variable as they are in the
  full language, i.e., with an unlimited supply of propositional
  variables.  It follows that satisfiability for {\bf CTL}, as well as
  for \ATL, with a single variable is EXPTIME-complete, while
  satisfiability for ${\bf CTL}^\ast$, as well as for \ATLstar, with a
  single variable is 2EXPTIME-complete,---i.e., for these logics, the
  satisfiability for formulas with only one variable is as hard as
  satisfiability for arbitrary formulas.

  \keywords{branching-time temporal logics, alternating-time temporal
    logics, finite-variable fragments, computational complexity,
    semantic expressivity, satisfiability problem}
\end{abstract}
\section{Introduction}

The propositional Branching-time temporal logics {\bf
  CTL}~\cite{CE81,DGL16} and \CTLstar\linebreak \cite{EH86,DGL16}
have for a long time been used in formal specification and
verification of (parallel) non-terminating computer
programs~\cite{HR04,DGL16}, such as (components of) operating systems,
as well as in formal specification and verification of hardware.  More
recently, Alternating-time temporal logics {\bf ATL} and
\ATLstar~\cite{AHK02,DGL16} have been used for formal specification
and verification of multi-agent~\cite{SLB08} and, more broadly,
so-called open systems, i.e., systems whose correctness depends on the
actions of external entities, such as the environment or other agents
making up a multi-agent system.

Logics \CTL, \CTLstar, \ATL, and \ATLstar\ have two main applications
to computer system design, corresponding to two different stages in
the system design process, traditionally conceived of as having
specification, implementation, and verification phases.  First, the
task of verifying that an implemented system conforms to a
specification can be carried out by checking that a formula expressing
the specification is satisfied in the structure modelling the
system,---for program verification, this structure usually models
execution paths of the program; this task corresponds to the model
checking problem~\cite{CGP00} for the logic. Second, the task of
verifying that a specification of a system is satisfiable---and, thus,
can be implemented by some system---corresponds to the satisfiability
problem for the logic.  Being able to check that a specification is
satisfiable has the obvious advantage of avoiding wasted effort in
trying to implement unsatisfiable systems.  Moreover, an algorithm
that checks for satisfiability of a formula expressing a specification
builds, explicitly or implicitly, a model for the formula, thus
supplying a formal model of a system conforming to the specification;
this model can subsequently be used in the implementation phase.
There is hope that one day such models can be used as part of a
``push-button'' procedure producing an assuredly correct
implementation from a specification model, avoiding the need for
subsequent verification altogether.  Tableaux-style
satisfiability-checking algorithms developed for \CTL\ in~\cite{EH85},
for \CTLstar\ in~\cite{Reynolds09}, for \ATL\ in~\cite{GorSh09}, and
for \ATLstar\ in~\cite{David15} all implicitly build a model for the
formula whose satisfiability is being checked.

In this paper, we are concerned with the satisfiability problem for
{\bf CTL}, \CTLstar, {\bf ATL}, and \ATLstar; clearly, the complexity
of satisfiability for these logics is of crucial importance to their
applications to formal specification. It is well-known that, for
formulas that might contain contain an arbitrary number of
propositional variables, the complexity of satisfiability for all of
these logics is quite high: it is EXPTIME-complete for {\bf
  CTL}~\cite{FL79,EH85}, 2EXPTIME-complete for
\CTLstar~\cite{VardiStockmeyer85}, EXPTIME-complete for {\bf
  ATL}~\cite{GD06,WLWW06}, and 2EXPTIME-complete for
\ATLstar~\cite{Schewe08}.

It has, however, been observed (see, for example, \cite{DS02}) that,
in practice, formulas expressing formal specifications, despite being
quite long and containing deeply nested temporal operators, usually
contain only a very small number of propositional
variables,---typically, two or three.  The question thus arises
whether limiting the number of propositional variables allowed to be
used in the construction of formulas we take as inputs can bring down
the complexity of the satisfiability problem for {\bf CTL}, \CTLstar,
{\bf ATL}, and \ATLstar.  Such an effect is not, after all, unknown in
logic:
examples are known of logics whose satisfiability problem goes down
from ``intractable'' to ``tractable'' once we place a limit on the
number of propositional variables allowed in the language: thus,
satisfiability for the classical propositional logic as well as the
extensions of the modal logic {\bf K5}~\cite{NTh75}, which include
such logics as {\bf K45}, {\bf KD45}, and {\bf S5} (see
also~\cite{Halpern95}), goes down from NP-complete to polynomial-time
decidable once we limit the number of propositional variables in the
language to an (arbitrary) finite number.\footnote{To avoid ambiguity,
  we emphasise that we use the standard complexity-theoretic
  convention of measuring the complexity of the input as its size; in
  our case, this is the length of the input formula.  In other words,
  we do not measure the complexity of the input according to how many
  distinct variables it contains; limiting the number of variables
  simply provides a restriction on the languages we consider.}
Similarly, as follows from~\cite{Nishimura60}, satisfiability for the
intuitionistic propositional logic goes down from PSPACE-complete to
polynomial-time decidable if we allow only a single propositional
variable in the language.

The question of whether the complexity of satisfiability for {\bf
  CTL}, \CTLstar, {\bf ATL}, and \ATLstar can be reduced by
restricting the number of propositional variables allowed to be used
in the formulas has not, however, been investigated in the literature.
The present paper is mostly meant to fill that gap.

A similar question has been answered in the negative for Linear-time
temporal logic {\bf LTL} in~\cite{DS02}, where it was shown, using a
proof technique peculiar to {\bf LTL} (in particular, \cite{DS02}
relies on the fact that for {\bf LTL} with a finite number of
propositional variables satisfiability reduces to model-checking),
that a single-variable fragment of {\bf LTL} is PSPACE-complete, i.e.,
as computationally hard as the entire logic~\cite{SislaClarke85}. It
should be noted that, in this respect, {\bf LTL} behaves like most
``natural'' modal and temporal logics, for which the presence of even
a single variable in the language is sufficient to generate a fragment
whose satisfiability is as hard as satisfiability for the entire
logic.  The first results to this effect have been proven
in~\cite{BS93} for logics for reasoning about linguistic structures
and in~\cite{Sve03} for provability logic.  A general method of
proving such results for PSPACE-complete logics has been proposed
in~\cite{Halpern95}; even though~\cite{Halpern95} considers only a
handful of logics, the method can be generalised to large classes of
logics, often in the language without propositional
variables~\cite{Hem01,ChRyb03} (it is not, however, applicable to {\bf
  LTL}, as it relies on unrestricted branching in the models of the
logic, which runs contrary to the semantics of {\bf LTL},---hence the
need for a different approach, as in~\cite{DS02}).  In this paper, we
use a suitable modification of the technique from~\cite{Halpern95}
(see~\cite{RSh18a,RSh18b}) to show that single-variable fragments of
{\bf CTL}, \CTLstar, {\bf ATL}, and \ATLstar\ are as computationally
hard as the entire logics; thus, for these logics, the complexity of
satisfiability cannot be reduced by restricting the number of
variables in the language.

Before doing so, a few words might be in order to explain why the
technique from~\cite{Halpern95} is not directly applicable to the
logics we are considering in this paper. The approach
of~\cite{Halpern95} is to model propositional variables by (the
so-called pp-like) formulas of a single variable; to establish the
PSPACE-harness results presented in~\cite{Halpern95}, a substitution
is made of such pp-like formulas for propositional variables into
formulas encoding a PSPACE-hard problem.  In the case of logics
containing modalities corresponding to transitive relations, such as
the modal logic {\bf S4}, for such a substitution to work, the
formulas into which the substitution is made need to satisfy the
property referred to in~\cite{Halpern95} as ``evidence in a
structure,''---a formula is evident in a structure if it has a model
satisfying the following heredity condition: if a propositional
variable is true at a state, it has to be true at all the states
accessible from that state. In the case of PSPACE-complete logics,
formulas satisfying the evidence condition can always be found, as the
intuitionistic logic, which is PSPACE-complete, has the heredity
condition built into its semantics.  The situation is drastically
different for logics that are EXPTIME-hard, which is the case for all
the logics considered in the present paper: to show that a logic is
EXPTIME-hard, one uses formulas that require for their satisfiability
chains of states of the length exponential in the size of the
formula,---this cannot be achieved with formulas that are evident in a
structure, as by varying the valuations of propositional variables
that have to satisfy the heredity condition we can only describe
chains whose length is linear in the size of the formula.  Thus, the
technique from~\cite{Halpern95} is not directly applicable to
EXPTIME-hard logics with ``transitive'' modalities, as the formulas
into which the substitution of pp-like formulas needs to be made do
not satisfy the condition that has to be met for such a substitution
to work. As all the logics considered in this paper do have a
``transitive'' modality---namely, the temporal connective ``always in
the future'', which is interpreted by the reflexive, transitive
closure of the relation corresponding to the temporal connective ``at
the next instance''---this limitation prevents the technique
from~\cite{Halpern95} from being directly applied to them.

In the present paper, we modify the approach of~\cite{Halpern95} by
coming up with substitutions of single-variable formulas for
propositional variables that can be made into arbitrary formulas,
rather than formulas satisfying a particular property, such as
evidence in a structure.  This allows us to break away from the class
PSPACE and to deal with {\bf CTL}, \CTLstar, {\bf ATL}, and \ATLstar,
all of which are at least EXPTIME-hard.  A similar approach has
recently been used in~\cite{RSh18a} and~\cite{RSh18b} for some other
propositional modal logics.

A by-product of our approach, and another contribution of this paper,
is that we establish that single-variable fragments of {\bf CTL},
\CTLstar, {\bf ATL}, and \ATLstar\ are as semantically expressive as
the entire logic, i.e., all properties that can be specified with any
formula of the logic can be specified with a formula containing only
one variable---indeed, our complexity results follow from this. In
this light, the observation cited above---that in practice most
properties of interest are expressible in these logics using only a
very small number of variables---is not at all surprising from a
purely mathematical point of view, either.


The paper is structured as follows. In
Section~\ref{sec:syntax-semantics}, we introduce the syntax and
semantics of {\bf CTL} and \CTLstar. Then, in
Section~\ref{sec:single-variable-fragment}, we show that {\bf CTL} and
\CTLstar\ can be polynomial-time embedded into their single-variable
fragments.  As a corollary, we obtain that satisfiability for the
single variable fragment of {\bf CTL} is EXPTIME-complete and
satisfiability for the single variable of of \CTLstar\ is
2EXPTIME-complete. In Section~\ref{sec:atl}, we introduce the syntax
and semantics of {\bf ATL} and \ATLstar.  Then, in
Section~\ref{sec:alt-finite-variable-fragments}, we prove results for
\ATL\ and \ATLstar\ that are analogous to those proven in
Section~\ref{sec:single-variable-fragment} for \CTL\ and \CTLstar.  We
conclude in Section~\ref{sec:conclusion} by discussing other
formalisms related to the logics considered in this paper to which our
proof technique can be applied to obtain similar results.


\section{Branching-time temporal logics}
\label{sec:syntax-semantics}

We start by briefly recalling the syntax and semantics of \CTL\ and
\CTLstar.

The language of ${\bf CTL}^\ast$ contains a countable set
$\Var = \{p_1, p_2, \ldots \}$ of propositional variables, the
propositional constant $\falsehood$ (``falsehood''), the Boolean
connective $\imp$ (``if \ldots, then \ldots''), the path quantifier
$\forall$, and temporal connectives $\next$ (``next'') and $\until$
(``until'').  The language contains two kinds of formulas: state
formulas and path formulas, so called because they are evaluated in
the models at states and paths, respectively. State formulas $\vp$ and
path formulas $\vartheta$ are simultaneously defined by the following
BNF expressions:
\[\vp ::= p  \mid \falsehood \mid  (\vp
\imp \vp) \mid \forall \vartheta, \]
\[\vartheta ::= \vp \mid (\vartheta \imp \vartheta) \mid (\vartheta \until \vartheta)
\mid \next \vartheta, \]
where $p$ ranges over \Var.  Other Boolean connectives are defined as
follows: $\neg A := (A \imp \falsehood)$,
$(A \con B) := \neg (A \imp \neg B)$, $(A \dis B) := (\neg A \imp B)$,
and $(A \equivalence B) := (A \imp B) \con (B \imp A)$, where $A$ and
$B$ can be either state or path formulas.  We also define
$\top := {\bottom} \imp {\bottom}$,
$\Diamond\, \vartheta := (\truth \until \vartheta)$,
$\Box\, \vartheta := \neg \Diamond \neg \vartheta$, and
$\exists\, \vartheta := \neg \forall \neg \vartheta$.

Formulas are evaluated in Kripke models.  A Kripke model is a tuple
\linebreak $\mmodel{M} = (\states{S}, \ar, V)$, where \states{S} is a
non-empty set (of states), $\ar$ is a binary \linebreak (transition)
relation on \states{S} that is serial (i.e., for every
$s \in \states{S}$, there exists $s' \in \states{S}$ such that
$s \ar s'$), and $V$ is a (valuation) function
$V: \Var \rightarrow 2^{\states{S}}$.

An infinite sequence $s_0, s_1, \ldots$ of states in \mmodel{M} such
that $s_i \ar s_{i+1}$, for every $i \geqslant 0$, is called a
\textit{path}.  Given a path $\pi$ and some $i \geqslant 0$, we denote
by $\pi[i]$ the $i$th element of $\pi$ and by $\pi[i, \infty]$ the
suffix of $\pi$ beginning at the $i$th element. If $s \in \states{S}$,
we denote by $\Pi(s)$ the set of all paths $\pi$ such that
$\pi[0] = s$.

The satisfaction relation between models \mmodel{M}, states $s$, and
state formulas $\vp$, as well as between models \mmodel{M}, paths
$\pi$, and path formulas $\vartheta$, is defined as follows:
\begin{itemize}
\item \sat{M}{s}{p_i} \sameas\ $s \in V(p_i)$;
\nopagebreak[3]
\item \sat{M}{s}{\falsehood} never holds;
\nopagebreak[3]
\item \sat{M}{s}{\vp_1 \imp \vp_2} \sameas\ \sat{M}{s}{\vp_1} implies
  \sat{M}{s}{\vp_2};
\nopagebreak[3]
\item \sat{M}{s}{\forall \vartheta_1} \sameas\ \sat{M}{\pi}{\vartheta_1} for
  every $\pi \in \Pi(s)$.
\nopagebreak[3]
\item \sat{M}{\pi}{\vp_1} \sameas\ \sat{M}{\pi[0]}{\vp_1};
  \nopagebreak[3]
\item \sat{M}{\pi}{\vartheta_1 \imp \vartheta_2} \sameas\ \sat{M}{\pi}{\vartheta_1} implies
  \sat{M}{\pi}{\vartheta_2};
\nopagebreak[3]
\item \sat{M}{\pi}{\next \vartheta_1} \sameas\ \sat{M}{\pi[1, \infty]}{\vartheta_1};
\nopagebreak[3]
\item \sat{M}{\pi}{\vartheta_1 \until \vartheta_2} \sameas\ \sat{M}{\pi[i,
    \infty]}{\vartheta_2} for some $i \geqslant 0$ and \sat{M}{\pi[j,
    \infty]}{\vartheta_1} for every $j$ such that $0 \leqslant j < i$.
\end{itemize}
A \CTLstar-formula is a state formula in this language. A
\CTLstar-formula is satisfiable if it is satisfied by some state of
some model, and valid if it is satisfied by every state of every
model.  Formally, by $\CTLstar$ we mean the set of valid
\CTLstar-formulas.  Notice that this set is closed under uniform
substitution.

Logic \CTL\ can be thought of as a fragment of \CTLstar\ containing
only formulas where a path quantifier is always paired up with a
temporal connective.  This, in particular, disallows formulas whose
main sign is a temporal connective and, thus, eliminates
path-formulas.  Such composite ``modal'' operators are $\AX$
(universal ``next''), $\forall\, \until$ (universal ``until''), and
$\exists\, \until$ (existential ``until'').  Formulas are defined by
the following BNF expression:
\[\vp ::= p  \mid \falsehood \mid  (\vp
\imp \vp) \mid \AX \vp \mid \forall\, (\vp \until \vp) \mid \exists\,
(\vp \until \vp),
\] where $p$ ranges over \Var. We also define
$\neg \vp := (\vp \imp \falsehood)$,
$(\vp \con \psi) := \neg (\vp \imp \neg \psi)$,
$(\vp\dis \psi) := (\neg \vp \imp \psi)$,
$\top = {\bottom} \imp {\bottom}$, $\EX \vp := \neg \AX \neg \vp$,
$\exists \Diamond \vp := \exists ( \truth \until \vp)$, and
$\forall \Box \vp := \neg \exists \Diamond \neg \vp$.

The satisfaction relation between models \mmodel{M}, states $s$, and
formulas $\vp$ is inductively defined as follows (we only list the
cases for the ``new'' modal operators):
\begin{itemize}
\item \sat{M}{s}{\AX \vp_1} \sameas\ \sat{M}{s'}{\vp_1} whenever $s \ar s'$;
\nopagebreak[3]
\item \sat{M}{s}{\forall (\vp_1 \until \vp_2)} \sameas\ for every path
  $s_0 \ar s_1 \ar \ldots$ with $s_0 = s$, \linebreak
  \sat{M}{s_i}{\vp_2}, for some $i \geqslant 0$, and
  \sat{M}{s_j}{\vp_1}, for every $0 \leqslant j < i$; \nopagebreak[3]
\item \sat{M}{s}{\exists (\vp_1 \until \vp_2)} \sameas\ there exists a
  path $s_0 \ar s_1 \ar \ldots$ with $s_0 = s$, such that
  \sat{M}{s_i}{\vp_2}, for some $i \geqslant 0$, and
  \sat{M}{s_j}{\vp_1}, for every $0 \leqslant j < i$.
\end{itemize}
Satisfiable and valid formulas are defined as for \CTLstar.  Formally,
by $\CTL$ we mean the set of valid \CTL-formulas; this set is closed
under substitution.

For each of the logics described above, by a variable-free fragment we
mean the subset of the logic containing only formulas without any
propositional variables.  Given formulas $\vp$, $\psi$ and a
propositional variable $p$, we denote by $\vp[p/\psi]$ the result of
uniformly substituting $\psi$ for $p$ in $\vp$.

\section{Finite-variable fragments of \CTLstar\ and \CTL}
\label{sec:single-variable-fragment}

In this section, we consider the complexity of satisfiability for
finite-variable fragments of {\bf CTL} and \CTLstar, as well as
semantic expressivity of those fragments.

We start by noticing that for both {\bf CTL} and \CTLstar\
satisfiability of the variable-free fragment is polynomial-time
decidable.  Indeed, it is easy to check that, for these logics, every
variable-free formula is equivalent to either $\bottom$ or $\top$.
%
Thus, to check for satisfiability of a variable-free formula $\vp$,
all we need to do is to recursively replace each subformula of $\vp$
by either $\bottom$ or $\top$, which gives us an algorithm that runs
in time linear in the size of $\vp$.  Since both {\bf CTL} and
\CTLstar\ are at least EXPTIME-hard and P $\ne$ EXPTIME, variable-free
fragments of these logics cannot be as expressive as the entire logic.

We next prove that the situation changes once we allow just one
variable to be used in the construction of formulas.  Then, we can
express everything we can express in the full languages of \CTL\ and
\CTLstar; as a consequence, the complexity of satisfiability becomes
as hard as satisfiability for the full languages. In what follows, we
first present the proof for \CTLstar, and then point out how that work
carries over to {\bf CTL}.

Let $\vp$ be an arbitrary \CTLstar-formula.  Without a loss of
generality we may assume that $\vp$ contains propositional variables
$p_1, \ldots p_n$. Let $p_{n+1}$ be a variable not occurring in $\vp$.
First, inductively define the translation $\cdot'$ as follows:
\begin{center}
\begin{tabular}{llll}
  ${p_i}'$ & = & $p_i$, $\mbox{~~where~} i \in \{1, \ldots, n \}$; \\
  $\bottom'$ & = & $\bottom$; & \\
  $(\phi \imp \psi)'$ & = & $\phi' \imp \psi'$; & \\
  $(\forall \alpha)'$ & = & $\forall (\Box p_{n+1} \imp \alpha')$; &\\
  $(\next \alpha)'$ & = & $\next \alpha'$; & \\
  $(\alpha \until \beta)'$ & = & $\alpha' \until \beta'$.  &
\end{tabular}
\end{center}
Next, let
$$ \Theta = p_{n+1} \con \AG (\EX p_{n+1} \equivalence
p_{n+1}),$$
and define $$ \widehat{\vp} = \Theta \con \varphi'.$$

Intuitively, the translation $\cdot'$ restricts evaluation of formulas
to the paths where every state makes the variable $p_{n+1}$ true,
while $\Theta$ acts as a guard making sure that all paths in a model
satisfy this property.  Notice that $\vp$ is equivalent to
$\widehat{\vp} [p_{n+1} / \top]$.

\begin{lemma}
  \label{lem:varphi-truth}
  Formula $\varphi$ is satisfiable if, and only if, formula
  $\widehat{\vp}$ is satisfiable.
\end{lemma}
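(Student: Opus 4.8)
The plan is to prove both directions of the biconditional by constructing models, using the intuition already flagged in the text: the translation $\cdot'$ relativises every path quantifier to the paths along which $p_{n+1}$ is everywhere true, and $\Theta$ guarantees that in a model of $\widehat{\vp}$ \emph{all} paths (from the state where $\Theta$ holds) have this property. The key technical device will be an auxiliary claim, proved by simultaneous induction on the structure of state and path formulas, relating satisfaction of a subformula $\alpha$ of $\vp$ in an arbitrary model to satisfaction of its translation $\alpha'$ in the model where $\Theta$ has forced $p_{n+1}$ to hold along all paths.

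For the right-to-left direction, suppose $\widehat{\vp} = \Theta \con \vp'$ is satisfied at some state $s_0$ of a model $\mmodel{M} = (\states{S}, \ar, V)$. First I would observe what $\Theta = p_{n+1} \con \AG (\EX p_{n+1} \equivalence p_{n+1})$ forces: $p_{n+1}$ holds at $s_0$, and at every state reachable from $s_0$, $p_{n+1}$ holds at that state if and only if it holds at some successor. The crucial consequence is that from $s_0$ there is at least one path along which $p_{n+1}$ holds at every state (build it step by step: $\EX p_{n+1}$ is available at each $p_{n+1}$-state because $p_{n+1} \imp \EX p_{n+1}$), and that along \emph{any} path all of whose states satisfy $p_{n+1}$, the subformula $\Box p_{n+1}$ is true. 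I would then prove by induction that for every subformula $\alpha$ of $\vp$ and every state/path on which $p_{n+1}$ holds throughout the relevant future, $\alpha$ is satisfied exactly when $\alpha'$ is; the interesting inductive step is the path-quantifier case, where $(\forall \alpha)' = \forall(\Box p_{n+1} \imp \alpha')$ restricts the universal quantification to precisely the $p_{n+1}$-everywhere paths, matching the semantics of $\forall \alpha$ when we only care about such paths. Applied at $s_0$, where $p_{n+1}$ holds and $\vp'$ is true, this yields $\sat{M}{s_0}{\vp}$, so $\vp$ is satisfiable.

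For the left-to-right direction, suppose $\vp$ is satisfied at $s_0$ in a model $\mmodel{M}$. Here I would modify the valuation (not the frame) by setting $p_{n+1}$ to be true at \emph{every} state of $\mmodel{M}$, obtaining $\mmodel{M}'$. Since $p_{n+1}$ does not occur in $\vp$, satisfaction of $\vp$ at $s_0$ is unaffected. In $\mmodel{M}'$ the formula $\Theta$ holds at $s_0$ trivially, because $p_{n+1}$ is true everywhere, so $\EX p_{n+1}$ and $p_{n+1}$ are both true at every reachable state; moreover $\Box p_{n+1}$ is true along every path. The same inductive correspondence as above — now trivialised by the fact that every path is a $p_{n+1}$-everywhere path — gives that $\alpha$ and $\alpha'$ agree throughout $\mmodel{M}'$, so in particular $\vp'$ holds at $s_0$, whence $\widehat{\vp} = \Theta \con \vp'$ is satisfied.

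The main obstacle I expect is setting up the simultaneous induction cleanly enough that the path-quantifier clause goes through: I must phrase the inductive hypothesis so that it applies both to state formulas (evaluated at states where $p_{n+1}$ holds) and to path formulas (evaluated along paths whose states all satisfy $p_{n+1}$), and I need to verify that the guard $\Box p_{n+1}$ picks out exactly the right set of paths — neither excluding any $p_{n+1}$-everywhere path nor admitting any path with a $p_{n+1}$-failing state. Establishing, from $\Theta$, the existence of at least one $p_{n+1}$-everywhere path (so that $\forall$ over such paths is not vacuous in the wrong way, and the seriality of $\ar$ is respected) is the one place where a short separate argument is genuinely needed; everything else is routine structural bookkeeping.
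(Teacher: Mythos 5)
Your left-to-right direction is correct, though it takes a different route from the paper: where you make $p_{n+1}$ true at every state and re-verify $\Theta \con \vp'$ by a (now trivialised) induction, the paper gets this direction purely syntactically, using closure of $\CTLstar$ under substitution and the equivalence $\vp \equivalence \widehat{\vp}[p_{n+1}/\top]$; either argument works. The genuine problem is in the right-to-left direction: from \sat{M}{s_0}{\widehat{\vp}} you conclude \sat{M}{s_0}{\vp}, i.e., truth of $\vp$ \emph{in the original model} $\mmodel{M}$, and that conclusion is false in general. The guard $\Theta$ does not force every path from $s_0$ to satisfy $\Box p_{n+1}$; it only forces every reachable $p_{n+1}$-state to have \emph{some} $p_{n+1}$-successor, so a $p_{n+1}$-state may also have successors where $p_{n+1}$ fails, and about the paths through those successors $\vp'$ says nothing, while $\vp$ quantifies over them too. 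Concretely, let $\vp = \forall \Box\, p_1$ (so $n = 1$ and the fresh variable is $p_2$), and let $\mmodel{M}$ have states $s_0, s_1, s_2$ with transitions $s_0 \ar s_1$, $s_0 \ar s_2$, $s_1 \ar s_1$, $s_2 \ar s_2$, and $V(p_1) = V(p_2) = \{s_0, s_1\}$. Then $\Theta$ holds at $s_0$ (at $s_2$ both sides of $\EX p_2 \equivalence p_2$ are false), and $\vp' = \forall (\Box\, p_2 \imp \Box\, p_1)$ holds at $s_0$, since the only path from $s_0$ satisfying $\Box\, p_2$ is $s_0, s_1, s_1, \ldots$; hence \sat{M}{s_0}{\widehat{\vp}}. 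Yet \notsat{M}{s_0}{\vp}, witnessed by the path $s_0, s_2, s_2, \ldots$.

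This example also shows that your inductive claim cannot be repaired while staying inside the single model $\mmodel{M}$. If ``$p_{n+1}$ holds throughout the relevant future'' at a state means only that $p_{n+1}$ holds there, the claim fails at the path-quantifier step ($\vp'$ holds at $s_0$, $\vp$ does not). If instead it means that every path from the state satisfies $\Box p_{n+1}$, then the claim cannot even be invoked at $s_0$ ($\Theta$ does not give that property, as the example shows), and the induction additionally breaks where a state subformula occurring as a path formula is evaluated at $\pi[0]$: a $p_{n+1}$-everywhere path $\pi$ does not guarantee that \emph{all} paths from $\pi[0]$ are $p_{n+1}$-everywhere. The missing idea is a change of model, which is exactly what the paper's proof does: let $\mmodel{M'}$ be the smallest submodel of $\mmodel{M}$ containing $s_0$ and closed under taking $p_{n+1}$-successors. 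The guard $\Theta$ guarantees that $\mmodel{M'}$ is serial and that $p_{n+1}$ is true at all of its states, and the simultaneous induction is then a \emph{two-model} statement relating \sat{M}{x}{\psi'} to \sat{M'}{x}{\psi} (and likewise for paths); its path-quantifier case goes through because the paths of $\mmodel{M}$ that start in $\mmodel{M'}$ and satisfy $\Box p_{n+1}$ are exactly the paths of $\mmodel{M'}$. In the example, this yields \sat{M'}{s_0}{\vp} with $\mmodel{M'}$ the restriction to $\{s_0, s_1\}$, which is all the lemma requires.
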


\begin{proof}
  Suppose that $\widehat{\vp}$ is not satisfiable.  Then,
  $\neg \widehat{\vp} \in \CTLstar$ and, since $\CTLstar$ is closed
  under substitution,
  $\neg \widehat{\vp} [p_{n+1} / \top] \in \CTLstar$.  As
  $\widehat{\vp} [p_{n+1} / \top] \equivalence \vp \in \CTLstar$, so
  $\neg \vp \in \CTLstar$; thus, $\vp$ is not satisfiable.

  Suppose that $\widehat{\vp}$ is satisfiable. In particular, let
  $\sat{M}{s_0}{\widehat{\vp}}$ for some model $\mmodel{M}$ and some
  $s_0$ in $\mmodel{M}$. Define $\mmodel{M}'$ to be the smallest
  submodel of $\mmodel{M}$ such that
  \begin{itemize}
  \item $s_0$ is in $\mmodel{M'}$;
  \item if $x$ is in $\mmodel{M'}$, $x \ar y$, and
    \sat{M}{y}{p_{n+1}}, then $y$ is also in $\mmodel{M'}$.
  \end{itemize}
  Notice that, since \sat{M}{s_0}{p_{n+1} \con \AG (\EX p_{n+1}
    \equivalence p_{n+1})}, the model $\mmodel{M'}$ is serial, as
  required, and that $p_{n+1}$ is true at every state of
  $\mmodel{M}'$.

  We now show that $\sat{M'}{s_0}{\vp}$.  Since
  \sat{M}{s_0}{\varphi'}, it suffices to prove that, for every state
  $x$ in \mmodel{M'} and every state subformula $\psi$ of $\varphi$,
  we have \sat{M}{x}{\psi'} if, and only if, \sat{M'}{x}{\psi}; and
  that, for every path $\pi$ in \mmodel{M'} and every path subformula
  $\alpha$ of $\varphi$, we have \sat{M}{\pi}{\alpha'} if, and only
  if, \sat{M'}{\pi}{\alpha}. This can be done by simultaneous
  induction on $\psi$ and $\alpha$.

  The base case as well as Boolean cases are straightforward.

  Let $\psi = \forall \alpha$, so
  $\psi' = \forall (\Box p_{n+1} \imp \alpha')$.  Assume that
  \notsat{M}{x}{\forall (\Box p_{n+1} \imp \alpha')}. Then,
  \notsat{M}{\pi}{\alpha'}, for some $\pi \in \Pi(x)$ such that
  \sat{M}{\pi[i]}{p_{n+1}}, for every $i \geqslant 0$.  By
  construction of \mmodel{M'}, $\pi$ is a path is \mmodel{M'}; thus,
  we can apply the inductive hypothesis to conclude that
  \notsat{M'}{\pi}{\alpha}.  Therefore, \notsat{M'}{x}{\forall
    \alpha}, as required.  Conversely, assume that
  \notsat{M'}{x}{\forall \alpha}.  Then, \notsat{M'}{\pi}{\alpha}, for
  some $\pi \in \Pi(x)$.  Clearly, $\pi$ is a path in \mmodel{M}.
  Since $p_{n+1}$ is true at every state in \mmodel{M'}, and thus, at
  every state in $\pi$, using the inductive hypothesis, we conclude
  that \linebreak \notsat{M}{x}{\forall (\Box p_{n+1} \imp \alpha')}.

  The cases for the temporal connectives are straightforward.

\end{proof}

\begin{lemma}
  \label{lem:pn+1}
  If $\widehat{\vp}$ is satisfiable, then it is satisfied in a model
  where $p_{n+1}$ is true at every state.
\end{lemma}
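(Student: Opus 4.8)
The plan is to reuse the model $\mmodel{M}'$ constructed in the proof of Lemma~\ref{lem:varphi-truth}. Recall that, starting from a model $\mmodel{M}$ and a state $s_0$ with $\sat{M}{s_0}{\widehat{\vp}}$, that construction yields the smallest submodel $\mmodel{M}'$ containing $s_0$ and closed under passing to those successors at which $p_{n+1}$ is true. As observed there, $\mmodel{M}'$ is serial and $p_{n+1}$ is true at every state of $\mmodel{M}'$, so $\mmodel{M}'$ is already a model of exactly the kind the lemma asks for. It therefore remains only to check that $\widehat{\vp}$ itself---and not merely $\vp$---is satisfied at $s_0$ in $\mmodel{M}'$.

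First I would record the elementary substitution fact that, in any model in which $p_{n+1}$ is true at every state, an arbitrary formula $\chi$ and its substitution instance $\chi[p_{n+1}/\top]$ have the same truth value at every state and along every path. This is proved by a routine simultaneous induction on state and path formulas: the only nontrivial base case is $\chi = p_{n+1}$, whose truth value agrees with that of $\top$ by hypothesis, while the Boolean, path-quantifier, and temporal cases follow immediately from the induction hypothesis.

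Applying this fact to $\mmodel{M}'$, in which $p_{n+1}$ holds everywhere, gives that $\sat{M'}{s_0}{\widehat{\vp}}$ holds if and only if $\sat{M'}{s_0}{\widehat{\vp}[p_{n+1}/\top]}$ holds. Since $\vp$ and $\widehat{\vp}[p_{n+1}/\top]$ are equivalent formulas (as noted just before the lemma), the latter is in turn equivalent to $\sat{M'}{s_0}{\vp}$. But the proof of Lemma~\ref{lem:varphi-truth} already establishes $\sat{M'}{s_0}{\vp}$, so $\sat{M'}{s_0}{\widehat{\vp}}$ follows and $\mmodel{M}'$ witnesses the claim.

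The only genuinely delicate point is bridging from the conclusion $\sat{M'}{s_0}{\vp}$ of the previous lemma to the required $\sat{M'}{s_0}{\widehat{\vp}}$; the substitution fact handles this cleanly once we exploit that $p_{n+1}$ is true throughout $\mmodel{M}'$. An alternative to invoking the substitution fact is to argue directly, again by induction, that within $\mmodel{M}'$ every guard $\Box p_{n+1}$ introduced by the translation $\cdot'$ is vacuously true, so that $\vp'$ and $\vp$ agree on $\mmodel{M}'$; combined with $\sat{M'}{s_0}{\vp}$ this yields $\sat{M'}{s_0}{\vp'}$, while $\sat{M'}{s_0}{\Theta}$ is immediate from seriality together with $p_{n+1}$ holding everywhere, again giving $\sat{M'}{s_0}{\widehat{\vp}}$.
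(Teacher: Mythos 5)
Your proof is correct and follows essentially the same route as the paper: take the model $\mmodel{M}'$ built in the proof of Lemma~\ref{lem:varphi-truth} (in which $p_{n+1}$ holds everywhere and $\vp$ is satisfied), and use the equivalence of $\vp$ with $\widehat{\vp}[p_{n+1}/\top]$ to transfer satisfaction back to $\widehat{\vp}$. The only difference is presentational: the paper compresses the final step into a ``clearly,'' whereas you state and prove the underlying substitution fact---that $\chi$ and $\chi[p_{n+1}/\top]$ agree in any model where $p_{n+1}$ is globally true---which is exactly the justification the paper leaves implicit.
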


\begin{proof}
  If $\widehat{\vp}$ is satisfiable, then, as has been shown in the
  proof of Lemma~\ref{lem:varphi-truth}, $\vp$ is satisfied in a model
  where $p_{n+1}$ is true at every state; i.e., \sat{M}{s}{\vp} for
  some $\mmodel{M} = (\states{S}, \ar, V)$ such that $p_{n+1}$ is true
  at every state in \states{S} and some $s \in S$. Since $\vp$ is
  equivalent to $\widehat{\vp} [p_{n+1} / \top]$, clearly
  \sat{M}{s}{\widehat{\vp}}.
\end{proof}

Next, we model all the variables of $\widehat{\vp}$ by single-variable
formulas $A_1, \ldots, A_m$.  This is done in the following way.
Consider the class \cm{M} of models that, for each
$m \in \{1, \ldots, n+1\}$, contains a model
$\mmodel{M}_m = (\states{S}_m, \ar, V_m)$ defined as follows:
\begin{itemize}
  \item $\states{S}_m = \{r_m, b^m, a_1^m, a_2^m, \ldots, a_{2m}^m\}$;
  \item
    $\ar\ = \{ \langle r_m, b^m \rangle, \langle r_m, a_1^m \rangle \}
    \union \{ \langle a_i^m, a_{i+1}^m \rangle : 1 \leq m \leq 2m -
    1 \} \union \\ \{ \langle s, s \rangle : s \in \states{S}_m \}$;
  \item $ s \in V_m (p)$ if, and only if, $s = r_m$ or $s = a_{2k}^m$,
    for some $k \in \{1, \ldots, m \}$.
  \end{itemize}

\begin{figure}
\begin{center}
\begin{picture}(100,195)
\put(60,10){$\circ$}
\put(60,50){$\circ$}
\put(60,90){$\circ$}
\put(60,130){$\circ$}
\put(60,190){$\circ$}
\put(20,50){$\circ$}
\put(70,10){$r_m\models A_m$}
\put(45,10){$~p$}
\put(45,10){}
\put(42,50){$\neg p$}
\put(45,90){$~p$}
\put(42,130){$\neg p$}
\put(45,190){$~p$}
\put(2,50){$\neg p$}
\put(70,10){}
\put(70,50){$a_1^m$}
\put(70,90){$a_2^m$}
\put(70,130){$a_3^m$}
\put(70,190){$a_{2m}^m$}
\put(20,60){$b^m$}
\put(62.1,14){\vector(0,1){36}}
\put(62.1,54){\vector(0,1){36}}
\put(62.1,94){\vector(0,1){36}}
\put(61,14){\vector(-1,1){37}}
\put(62.1,134){\vector(0,1){19}}
\put(62.1,172){\vector(0,1){19}}
\put(61.1,158){$\vdots$}
\end{picture}
\caption{Model $\frak{M}_m$}
\label{fig_Mm_ctl_star}
\end{center}
\end{figure}
  
  The model $\mmodel{M}_m$ is depicted in
  Figure~\ref{fig_Mm_ctl_star}, where circles represent states with
  loops.
With every such $\mmodel{M}_m$, we associate a formula $A_m$, in the
following way.  First, inductively define the sequence of formulas
  \begin{center}
  \begin{tabular}{lll}
    $\chi_0$ & = & $\forall\, \Box\, p$; \\
    $\chi_{k+1}$ & = & $p \con \EX ( \neg p \con \EX \chi_k)$.
  \end{tabular}
  \end{center}
Next, for every $m \in \{1, \ldots, n + 1\}$, let
$$
A_m = \chi_m \con \EX \AG \neg p.
$$

\begin{lemma}
  \label{lem:roots}
  Let $\mmodel{M}_k \in \cm{M}$ and let $x$ be a state in
  $\mmodel{M}_k$. Then, $\mmodel{M}_k, x \models A_m$ if, and only if,
  $k = m$ and $x = r_m$.
\end{lemma}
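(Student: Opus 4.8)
The plan is to reduce the biconditional to two independent structural facts about $\mmodel{M}_k$: the exact set of states satisfying each formula $\chi_j$, and the exact set of states satisfying the conjunct $\EX \AG \neg p$; the lemma then follows by intersecting the two sets. Throughout I adopt the convention $a_0^k := r_k$, so that the chain reads uniformly as $a_0^k \ar a_1^k \ar \cdots \ar a_{2k}^k$, and I record the key feature of the valuation $V_k$: along this chain $p$ holds exactly at the even-indexed states, whereas the side state $b^k$ and every odd-indexed $a_i^k$ falsify $p$.

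First I would prove, by induction on $j$, that in $\mmodel{M}_k$ the formula $\chi_j$ is satisfied at exactly one state when $0 \leqslant j \leqslant k$, namely at $a_{2(k-j)}^k$, and at no state when $j > k$. For the base case, $\chi_0 = \AG p$ holds at $x$ precisely when every state reachable from $x$ satisfies $p$ (using seriality and the self-loops, the states occurring on paths from $x$ are exactly those reachable from $x$); the only such state is the top state $a_{2k}^k$, whose sole successor is itself. For the inductive step, $\chi_{j+1} = p \con \EX(\neg p \con \EX \chi_j)$ forces $x$ to satisfy $p$, to have a $\neg p$-successor $y$, and for $y$ in turn to have a successor satisfying $\chi_j$, i.e.\ (by the induction hypothesis) a successor equal to $a_{2(k-j)}^k$. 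Tracing predecessors two steps back along the chain --- and using that a self-loop preserves the truth value of $p$, so it can serve neither as the $\neg p$-step nor as the step into the $p$-valued $\chi_j$-state --- pins $x$ down uniquely to $a_{2(k-(j+1))}^k$ (which is $r_k$ exactly when $j+1 = k$), and shows $\chi_{j+1}$ is unsatisfiable once $2(k-(j+1)) < 0$.

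Next I would show that $\EX \AG \neg p$ is satisfied in $\mmodel{M}_k$ exactly at $r_k$ and $b^k$. The inner formula $\AG \neg p$ holds at $y$ iff every state reachable from $y$ falsifies $p$; since every $a_i^k$ reaches the $p$-state $a_{2k}^k$, while $r_k$ and each even $a_{2i}^k$ already satisfy $p$, the unique witness is $b^k$, whose only successor is itself. Hence $\EX \AG \neg p$ holds at $x$ iff $x$ has $b^k$ as a successor, and the only states with this property are $r_k$ (via the edge $r_k \ar b^k$) and $b^k$ (via its self-loop).

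Finally I would combine the two computations: $A_m = \chi_m \con \EX \AG \neg p$ is satisfied at $x$ in $\mmodel{M}_k$ iff $\chi_m$ holds at $x$ and $x \in \{r_k, b^k\}$. If $m > k$ then $\chi_m$ holds nowhere; if $m \leqslant k$ then $\chi_m$ holds only at $a_{2(k-m)}^k$, and this state lies in $\{r_k, b^k\}$ precisely when $2(k-m) = 0$, i.e.\ $m = k$ and $x = r_k$. This yields the claimed equivalence. I expect the only delicate point to be the uniqueness claim in the induction --- specifically, verifying that the pervasive self-loops cannot supply spurious witnesses for the nested $\EX$'s, which is exactly where the alternation of $p$ and $\neg p$ along the chain is exploited; the boundary case $j+1 = k$, where the target state becomes the root $r_k$ and the relevant $\neg p$-successor is $a_1^k$ rather than an interior odd state, also warrants a separate check.
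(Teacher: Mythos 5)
Your proof is correct: the truth-set computations are exactly right---$\chi_j$ holds in $\mmodel{M}_k$ only at $a^k_{2(k-j)}$ (and nowhere once $j>k$), $\EX\AG\neg p$ holds only at $r_k$ and $b^k$, and intersecting the two sets yields the biconditional, with the self-loops correctly ruled out as spurious witnesses by the alternation of $p$ and $\neg p$ along the chain. The paper's own proof of this lemma is literally ``Straightforward,'' so your argument is just the direct verification the authors omitted, i.e.\ essentially the intended approach.
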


\begin{proof}
  Straightforward.
\end{proof}

\noindent Now, for every $m \in \{1, \ldots, n+1\}$, define 
$$B_m = \EX A_m.$$ Finally, let $\sigma$ be a (substitution) function
that, for every $i \in \{1 \ldots n + 1\}$, replaces $p_i$ by $B_i$,
and let
$$\varphi^* = \sigma (\widehat{\vp}).$$
Notice that the formula $\varphi^*$ contains only a single variable, $p$.

\begin{lemma}
  \label{lem:main_lemma_ctl_star}
  Formula $\varphi$ is satisfiable if, and only if, formula
  $\varphi^*$ is satisfiable.
\end{lemma}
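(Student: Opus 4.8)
The plan is to prove the two implications separately. The direction ``$\varphi^*$ satisfiable $\Rightarrow$ $\varphi$ satisfiable'' is routine: suppose $\sat{N}{t_0}{\varphi^*}$ for some model $\mmodel{N} = (T, \ar, U)$ of the one-variable language. Since $\varphi^* = \sigma(\widehat{\vp})$ is obtained from $\widehat{\vp}$ by uniformly substituting the \emph{state} formula $B_i$ for $p_i$, I would re-read the $B_i$ as the new meanings of the variables: for $i \in \{1, \ldots, n+1\}$ define $U^\flat(p_i) = \{\, t \in T \mid \sat{N}{t}{B_i} \,\}$ and put $\mmodel{N}^\flat = (T, \ar, U^\flat)$. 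A simultaneous induction on state and path subformulas --- the semantic substitution lemma, whose only nontrivial clause is the variable clause, settled by the definition of $U^\flat$, and which respects the state/path split precisely because every $B_i$ is a state formula --- yields $\sat{N}{u}{\sigma(\chi)}$ if and only if $\mmodel{N}^\flat, u \models \chi$, for every subformula $\chi$ of $\widehat{\vp}$. Hence $\mmodel{N}^\flat, t_0 \models \widehat{\vp}$, so $\widehat{\vp}$ is satisfiable, and Lemma~\ref{lem:varphi-truth} gives that $\varphi$ is satisfiable.

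For the converse, suppose $\varphi$ is satisfiable. By Lemmas~\ref{lem:varphi-truth} and~\ref{lem:pn+1} I may fix a model $\mmodel{M} = (\states{S}, \ar, V)$ with $\sat{M}{s_0}{\widehat{\vp}}$ in which $p_{n+1}$ is true at every state; since $\Box p_{n+1}$ then holds on every path, the guards in $\varphi'$ are vacuous and $\sat{M}{s_0}{\varphi}$ as well. I would now graft gadgets onto $\mmodel{M}$: keep all of $\states{S}$ and its transitions, set the single variable $p$ false throughout $\states{S}$, and for every $s \in \states{S}$ and every $i$ with $s \in V(p_i)$ add a fresh disjoint copy of $\mmodel{M}_i$ (with $p$ valued as in $\mmodel{M}_i$) together with one edge from $s$ into its root. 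By (the argument for) Lemma~\ref{lem:roots}, $A_i$ holds in the resulting model $\mmodel{N}$ exactly at the roots of the copies of $\mmodel{M}_i$; as $p$ is false on $\states{S}$, no state of $\states{S}$ satisfies any $A_j$, so $\sat{N}{s}{B_i}$ if and only if an $\mmodel{M}_i$-copy is attached to $s$, i.e. if and only if $s \in V(p_i)$. Thus each $B_i$ faithfully simulates $p_i$ on $\states{S}$.

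The crux is the claim that $\sat{N}{x}{\sigma(\psi')}$ if and only if $\sat{M}{x}{\psi}$, for every state subformula $\psi$ of $\varphi$ and every $x \in \states{S}$, together with the analogous statement for path subformulas along paths lying inside $\states{S}$; taking $\psi = \varphi$ and checking $\sat{N}{s_0}{\sigma(\Theta)}$ separately then delivers $\sat{N}{s_0}{\varphi^*}$. I would prove the claim by simultaneous induction exactly as in Lemma~\ref{lem:varphi-truth}; the base, Boolean, and temporal cases are immediate (suffixes of $\states{S}$-paths are again $\states{S}$-paths), and the single load-bearing case is $\psi = \forall\alpha$, where $\sigma(\psi') = \forall(\Box B_{n+1} \imp \sigma(\alpha'))$. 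Everything hinges on the key structural fact (isolated below) that the paths of $\mmodel{N}$ validating $\Box B_{n+1}$ are exactly the genuine $\mmodel{M}$-paths out of $x$: granting it, the relativized quantifier ranges over precisely these paths --- all others satisfy the implication vacuously --- and the induction closes verbatim, while $\sat{N}{s_0}{\sigma(\Theta)}$ reduces to a one-line computation using seriality of $\mmodel{M}$ and the fact that every state of $\states{S}$, and no other state, satisfies $B_{n+1}$.

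I expect the real obstacle to be exactly this fact, and in particular verifying that \emph{no} gadget state satisfies $B_{n+1} = \EX A_{n+1}$ --- for otherwise a $B_{n+1}$-path could slip off $\states{S}$ into a copy of $\mmodel{M}_{n+1}$ and loop there, falsifying $\Box B_j$ for some $j \neq n+1$ and breaking the $\forall$-case. Since $A_{n+1}$ holds only at roots, a gadget state can satisfy $\EX A_{n+1}$ only by having a root among its successors; inside a copy the only edge pointing at the root would be the loop at the root itself, and the grafting edges point into it only from $\states{S}$. I would therefore attach the copies \emph{without} the loop at their roots, which (as the proof of Lemma~\ref{lem:roots} shows) leaves $A_{n+1}$ true at the root --- that satisfaction is witnessed through the $b$- and $a_1$-successors, not the loop --- and preserves seriality, while now making every gadget state, roots included, fail $B_{n+1}$. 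With this proviso the $B_{n+1}$-paths coincide with the $\mmodel{M}$-paths and the whole argument goes through.
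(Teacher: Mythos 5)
Your proof is correct, and in the hard direction it follows the paper's own route: the same guarded translation, the same gadget models and formulas $A_i$, $B_i$, the same grafting construction (whether you share one copy of each $\mmodel{M}_i$, as the paper does, or attach fresh copies per state is immaterial), and the same simultaneous induction whose only delicate case is $\psi = \forall\alpha$. The one point where you deviate---attaching the gadgets \emph{without} the self-loop at their roots---is not a cosmetic preference but a genuine repair of the paper's construction. The paper's $\mmodel{M}_m$ puts a loop on every state, $r_m$ included, and then in the combined model the root $r_{n+1}$ \emph{does} satisfy $B_{n+1} = \EX A_{n+1}$ (its own loop leads to a state satisfying $A_{n+1}$), so the paper's pivotal claim in the $\forall$-case, that no state outside $\states{S}$ satisfies $B_{n+1}$, is false as written---precisely the danger you isolated. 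The defect is real, not merely verbal: take $\vp = \forall \Box p_1$ and $\mmodel{M}$ a single state $s$ with a self-loop where $p_1$ and $p_2$ are both true; in the paper's $\mmodel{M}'$ the path $s, r_2, r_2, r_2, \ldots$ satisfies $\Box B_2$ (using the loop at $r_2$) but not $\Box B_1$ (no successor of $r_2$ satisfies $A_1$, by Lemma~\ref{lem:roots}), so \notsat{M'}{s}{\forall(\Box B_2 \imp \Box B_1)}, i.e.\ the constructed model fails to satisfy $\vp^*$ even though $\vp$ is satisfiable. Dropping the root loops restores the invariant that $B_{n+1}$ holds exactly on $\states{S}$, hence that the $\Box B_{n+1}$-paths issuing from states of $\states{S}$ are exactly the $\mmodel{M}$-paths, and your verification that Lemma~\ref{lem:roots} survives the change is the right one: $r_m$ still satisfies $A_m$ (the witnesses run through $b^m$ and $a_1^m$, never through the loop), and no other state's theory can change, since no gadget state has an edge into the root. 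Finally, your easy direction, done semantically by re-reading the $B_i$ as a new valuation, is just the model-theoretic counterpart of the paper's one-line argument from closure of \CTLstar\ under substitution applied to $\neg\widehat{\vp}$; either works.
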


\begin{proof}
  Suppose that $\varphi$ is not satisfiable.  Then, in view of
  Lemma~\ref{lem:varphi-truth}, $\widehat{\vp}$ is not satisfiable.
  Then, $\neg \widehat{\vp} \in \CTLstar$ and, since
  $\CTLstar$ is closed under substitution,
  $\neg \vp^\ast \in \CTLstar$.  Thus, $\vp^\ast$ is not
  satisfiable.
  
  Suppose that $\varphi$ is satisfiable.  Then, in view of
  Lemmas~\ref{lem:varphi-truth} and~\ref{lem:pn+1}, $\widehat{\vp}$ is
  satisfiable in a model $\mmodel{M} = (\states{S}, \ar, V)$ where
  $p_{n+1}$ is true at every state.  We can assume without a loss of
  generality that every $x \in \states{S}$ is connected by some path
  to $s$. Define model \mmodel{M'} as follows. Append to \mmodel{M}
  all the models from \cm{M} (i.e., take their disjoint union), and
  for every $x \in \states{S}$, make $r_m$, the root of
  $\mmodel{M}_m$, accessible from $x$ in \mmodel{M'} exactly when
  \sat{M}{x}{p_m}.  The evaluation of $p$ is defined as follows: for
  states from each $\mmodel{M}_m \in \cm{M}$, the evaluation is the
  same as in $\mmodel{M}_m$, and for every $x \in \states{S}$, let
  $x \notin V'(p)$.

  We now show that \sat{M'}{s}{\varphi^*}.  It is easy to check that
  \sat{M'}{s}{\sigma(\Theta)}.  It thus remains to show that
  \sat{M'}{s}{\sigma(\varphi')}.  Since \sat{M}{s}{\varphi'}, it
  suffices to prove that \sat{M}{x}{\psi'} if, and only if,
  \sat{M'}{x}{\sigma(\psi')}, for every state $x$ in \mmodel{M} and
  every state subformula $\psi$ of $\varphi$; and that
  \sat{M}{\pi}{\alpha'} if, and only if,
  \sat{M'}{\pi}{\sigma(\alpha')}, for every path $\pi$ in \mmodel{M}
  and every path subformula $\alpha$ of $\varphi$. This can be done by
  simultaneous induction on $\psi$ and $\alpha$.

  Let $\psi = p_i$, so $\psi' = p_i$ and $\sigma(\psi') = B_i$.
  Assume that \sat{M}{x}{p_i}.  Then, by construction of \mmodel{M'},
  we have \sat{M'}{x}{B_i}.  Conversely, assume that \sat{M'}{x}{B_i}.
  As \sat{M'}{x}{B_i} implies \sat{M'}{x}{\EX p} and since
  \notsat{M}{y}{p}, for every $y \in \states{S}$, this can only happen
  if $x \ar^{\mmodel{M'}} r_m$, for some $m \in \{1, \ldots, n+1\}$.
  Since, then, $r_m \models A_i$, in view of Lemma~\ref{lem:roots},
  $m = i$, and thus, by construction of \mmodel{M'}, we have
  \sat{M}{x}{p_i}.

  The Boolean cases are straightforward.

  Let $\psi = \forall \alpha$, so
  $\psi' = \forall (\Box p_{n+1} \imp \alpha')$ and
  $\sigma(\psi') = \forall (\Box B_{n+1} \imp \sigma(\alpha'))$.
  Assume that \notsat{M}{x}{\forall (\Box p_{n+1} \imp
    \alpha')}. Then, for some $\pi \in \Pi(x)$ such that
  \sat{M}{\pi[i]}{p_{n+1}} for every $i \geqslant 0$, we have
  \notsat{M}{\pi}{\alpha'}.  Clearly, $\pi$ is a path in \mmodel{M'},
  and thus, by inductive hypothesis, \sat{M'}{\pi[i]}{B_{n+1}}, for
  every $i \geqslant 0$, and \notsat{M'}{\pi}{\sigma(\alpha')}.  Hence,
  \notsat{M'}{x}{\forall (\Box B_{n+1} \imp \sigma(\alpha'))}, as
  required.  Conversely, assume that \notsat{M'}{x}{\forall (\Box
    B_{n+1} \imp \sigma(\alpha'))}.  Then, for some $\pi \in \Pi(x)$
  such that \sat{M'}{\pi[i]}{B_{n+1}} for every $i \geqslant 0$, we
  have \notsat{M'}{\pi}{\sigma(\alpha')}.  Since by construction of
  $\mmodel{M}'$, no state outside of $\states{S}$ satisfies $B_{n+1}$,
  we know that $\pi$ is a path in \mmodel{M}.  Thus, we can use the
  inductive hypothesis to conclude that \notsat{M}{x}{\forall (\Box
    p_{n+1} \imp \alpha')}.

  The cases for the temporal connectives are straightforward.

\end{proof}

Lemma~\ref{lem:main_lemma_ctl_star}, together with the observation
that the formula $\vp^\ast$ is polynomial-time computable from $\vp$,
give us the following:

\begin{theorem}
  \label{thr:CTLstar}
  There exists a polynomial-time computable function $e$ assigning to
  every \CTLstar-formula $\vp$ a single-variable formula $e(\vp)$ such
  that $e(\vp)$ is satisfiable if, and only if, $\vp$ is satisfiable.
\end{theorem}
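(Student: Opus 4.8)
The plan is to let $e$ be the function $\vp \mapsto \vp^*$ constructed above and to verify that it has the two properties demanded by the theorem. The semantic property---that $\vp^*$ is satisfiable if, and only if, $\vp$ is---is already established by Lemma~\ref{lem:main_lemma_ctl_star}, and the observation recorded immediately after the definition of $\vp^*$ guarantees that $\vp^*$ uses only the single variable $p$. So the only thing that remains is to show that $\vp^*$ is computable from $\vp$ in polynomial time, which I would establish by a size analysis of the three stages of the construction.

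First I would bound $\widehat{\vp} = \Theta \con \vp'$. The translation $\cdot'$ is a structural recursion in which every clause increases the size of its argument by at most an additive constant---the only non-trivial clause, $(\forall \alpha)' = \forall(\Box p_{n+1} \imp \alpha')$, merely prepends the fixed block $\Box p_{n+1} \imp$---so $\card{\vp'} = \bigo{\card{\vp}}$ and $\vp'$ is produced in linear time. Since $\Theta$ has constant size (it depends only on $p_{n+1}$), the guarded formula $\widehat{\vp}$ is likewise of size $\bigo{\card{\vp}}$. Next I would bound the substitution $\sigma$. From the recurrence $\chi_{k+1} = p \con \EX(\neg p \con \EX \chi_k)$ one reads off $\card{\chi_m} = \bigo{m}$, whence $\card{A_m} = \bigo{m}$ and $\card{B_m} = \bigo{m}$; as $m \leqslant n+1 \leqslant \card{\vp}+1$, each $B_i$ has size $\bigo{\card{\vp}}$. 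Because $\widehat{\vp}$ contains $\bigo{\card{\vp}}$ occurrences of variables, replacing each occurrence of $p_i$ by $B_i$ yields a formula $\vp^* = \sigma(\widehat{\vp})$ of size $\bigo{\card{\vp}^2}$, and the whole construction is plainly carried out in time polynomial in $\card{\vp}$.

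There is no genuine semantic difficulty left, since Lemma~\ref{lem:main_lemma_ctl_star} does all the model-theoretic work; the one point that needs care is precisely this size bookkeeping. The potential pitfall is the recursively defined family $\chi_0, \chi_1, \ldots$: were its members to grow super-linearly in their index---as recursively defined formula families easily can---the substitution could blow up exponentially. The linear recurrence above rules this out, keeping the overall blow-up quadratic and hence the reduction polynomial.
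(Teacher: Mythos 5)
Your proposal is correct and takes exactly the paper's route: the paper derives this theorem from Lemma~\ref{lem:main_lemma_ctl_star} together with the (unelaborated) observation that $\vp^{*}$ is polynomial-time computable from $\vp$. Your size bookkeeping---$\card{\vp'} = \bigo{\card{\vp}}$, $\card{B_i} = \bigo{\card{\vp}}$, hence $\card{\vp^{*}} = \bigo{\card{\vp}^2}$---merely makes explicit what the paper leaves implicit, and it is accurate.
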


\begin{theorem}
  \label{thr:ctlstar-complexity}
  The satisfiability problem for the single-variable fragment of
  \CTLstar\ is\/ {\rm 2EXPTIME}-complete.
\end{theorem}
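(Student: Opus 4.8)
The plan is to prove 2EXPTIME-completeness by establishing the matching upper and lower bounds separately, with essentially all of the real work already discharged by Theorem~\ref{thr:CTLstar}. The upper bound is immediate from the fact that the single-variable fragment is merely a syntactic restriction of full \CTLstar: every single-variable formula is in particular a \CTLstar-formula, so any satisfiability-checking procedure for \CTLstar\ applies to it verbatim. Since satisfiability for full \CTLstar\ is in 2EXPTIME~\cite{VardiStockmeyer85}, so is satisfiability for its single-variable fragment.

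For the lower bound I would invoke Theorem~\ref{thr:CTLstar} directly. That theorem provides a polynomial-time computable function $e$ sending each \CTLstar-formula $\vp$ to a single-variable formula $e(\vp)$ with the property that $e(\vp)$ is satisfiable if, and only if, $\vp$ is. This is exactly what is needed to exhibit $e$ as a polynomial-time many-one reduction from the satisfiability problem for unrestricted \CTLstar\ to the satisfiability problem for the single-variable fragment. Because the former problem is 2EXPTIME-hard~\cite{VardiStockmeyer85}, and the class 2EXPTIME is closed under polynomial-time reductions, the single-variable fragment inherits 2EXPTIME-hardness. Combining this with the upper bound gives completeness.

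I do not expect a genuine obstacle here, since the difficulty of the result has been absorbed into the construction of the substitution underlying Theorem~\ref{thr:CTLstar}; the present argument is, in effect, bookkeeping. The one point I would double-check is the polynomiality claim that makes $e$ a legitimate reduction, namely that $\vp^\ast = \sigma(\widehat{\vp})$ has size polynomial in $|\vp|$: each variable $p_i$ is replaced by $B_i = \EX A_i$ with $A_i = \chi_i \con \EX \AG \neg p$, and since $i \leqslant n+1 \leqslant |\vp|$, each replacement formula has size $\bigo{|\vp|}$; as $\widehat{\vp}$ contains at most $\bigo{|\vp|}$ variable occurrences, the total blow-up is at most quadratic, so $\vp^\ast$ is computable in time polynomial in $|\vp|$. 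With that confirmed, the theorem follows.
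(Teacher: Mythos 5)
Your proposal is correct and follows essentially the same route as the paper: the lower bound comes from the polynomial-time reduction of Theorem~\ref{thr:CTLstar} combined with 2EXPTIME-hardness of full \CTLstar~\cite{VardiStockmeyer85}, and the upper bound is inherited from the 2EXPTIME decision procedure for full \CTLstar, since the fragment is a syntactic restriction. Your extra check that $e$ incurs only polynomial blow-up is a point the paper already absorbs into the statement of Theorem~\ref{thr:CTLstar}, so nothing further is needed.
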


\begin{proof}
  The lower bound immediately follows from Theorem~\ref{thr:CTLstar}
  and 2EXPTIME-hardness of satisfiability for
  \CTLstar~\cite{VardiStockmeyer85}.  The upper bound follows from the
  \linebreak 2EXPTIME upper bound for satisfiability for
  \CTLstar~\cite{VardiStockmeyer85}.
\end{proof}

We now show how the argument presented above for \CTLstar\ can be
adapted to \CTL.  First, we notice that if our sole purpose were to
prove that satisfiability for the single-variable fragment of \CTL\ is
EXPTIME-complete, we would not need to work with the entire set of
connectives present in the language of \CTL,---it would suffice to
work with a relatively simple fragment of \CTL\ containing the modal
operators $\AX$ and $\AG$, whose satisfiability, as follows
from~\cite{FL79}, is EXPTIME-hard.  We do, however, also want to
establish that the single-variable fragment of \CTL\ is as expressive
the entire logic; therefore, we embed the entire \CTL\ into its
single-variable fragment.  To that end, we can carry out an argument
similar to the one presented above for \CTLstar.

First, we define the translation $\cdot'$ as follows:
\begin{center}
\begin{tabular}{llll}
  ${p_i}'$ & = & $p_i$ $\mbox{~~where~} i \in \{1, \ldots, n \}$; \\
  $(\bottom)'$ & = & $\bottom$; & \\
  $(\phi \imp \psi)'$ & = & $\phi' \imp \psi'$; & \\
  $(\AX \phi)'$ & = & $\AX (p_{n+1} \imp \phi')$; & \\
  $(\forall\, (\phi \until \psi))'$ & = & $\forall\, (\phi' \until
                                          (p_{n+1} \con \psi'))$; & \\
  $(\exists\, (\phi \until \psi))'$ & = & $\exists\, (\phi' \until
                                          (p_{n+1} \con \psi'))$. &
\end{tabular}
\end{center}
Next, let
$$ \Theta = p_{n+1} \con \AG (\EX p_{n+1} \equivalence p_{n+1}). $$
and define $$\widehat{\varphi} = \Theta \con \varphi'.$$ 

Intuitively, the translation $\cdot'$ restricts the evaluation of
formulas to the states where $p_{n+1}$ is true.  Formula $\Theta$ acts
as a guard making sure that all states in a model satisfy this
property.  We can then prove the analogues of
Lemmas~\ref{lem:varphi-truth} and~\ref{lem:pn+1}.

\begin{lemma}
  \label{lem:varphi-truth-ctl}
  Formula $\varphi$ is satisfiable if, and only if, formula
  $\widehat{\vp}$ is satisfiable.
\end{lemma}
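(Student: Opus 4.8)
The plan is to mirror the proof of Lemma~\ref{lem:varphi-truth}, adapting it to the \CTL\ setting where the translation now guards \emph{states} rather than \emph{paths}. The statement to prove is that $\vp$ is satisfiable if and only if $\widehat{\vp}$ is satisfiable, where $\widehat{\vp} = \Theta \con \vp'$ with the \CTL-specific translation $\cdot'$.

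Let me sketch both directions.

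\textbf{Proof proposal.}

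First I would dispose of the easy direction. Suppose $\widehat{\vp}$ is not satisfiable; then $\neg \widehat{\vp} \in \CTL$. Since $\CTL$ is closed under substitution, substituting $\top$ for $p_{n+1}$ gives $\neg \widehat{\vp}[p_{n+1}/\top] \in \CTL$. As in the \CTLstar\ case, one checks that $\widehat{\vp}[p_{n+1}/\top] \equivalence \vp \in \CTL$ (the guard $\Theta$ becomes valid and each clause of $\cdot'$ collapses to the identity once $p_{n+1}$ is replaced by $\top$), so $\neg \vp \in \CTL$, and hence $\vp$ is unsatisfiable. This gives the ``only if'' direction by contraposition.

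For the converse, I would assume $\sat{M}{s_0}{\widehat{\vp}}$ for some model $\mmodel{M}$ and state $s_0$, and build the submodel $\mmodel{M'}$ exactly as in Lemma~\ref{lem:varphi-truth}: take $\mmodel{M'}$ to be the smallest submodel of $\mmodel{M}$ containing $s_0$ and closed under the rule that whenever $x$ is in $\mmodel{M'}$, $x \ar y$, and $\sat{M}{y}{p_{n+1}}$, then $y$ is in $\mmodel{M'}$. Because $\sat{M}{s_0}{\Theta}$, the formula $\AG(\EX p_{n+1} \equivalence p_{n+1})$ guarantees that every state in $\mmodel{M'}$ has a successor satisfying $p_{n+1}$, so $\mmodel{M'}$ is serial, and $p_{n+1}$ holds at every state of $\mmodel{M'}$. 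The core claim is then that for every state $x$ of $\mmodel{M'}$ and every subformula $\psi$ of $\vp$, we have $\sat{M}{x}{\psi'}$ iff $\sat{M'}{x}{\psi}$, proved by induction on $\psi$.

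The main obstacle, and the only place where the \CTL\ argument genuinely diverges from the \CTLstar\ one, is the inductive step for the until-operators, since \CTL\ has no freestanding path formulas and the guard $p_{n+1}$ is threaded through $\cdot'$ differently. For $\psi = \forall(\phi \until \chi)$ with $\psi' = \forall(\phi' \until (p_{n+1} \con \chi'))$, I would argue as follows. The paths of $\mmodel{M'}$ starting at $x$ are precisely those paths of $\mmodel{M}$ starting at $x$ along which every state satisfies $p_{n+1}$: by construction of $\mmodel{M'}$ such paths remain inside $\mmodel{M'}$, and conversely every path in $\mmodel{M'}$ has this property. Since satisfaction of $\phi' \until (p_{n+1} \con \chi')$ at a witnessing index $i$ requires $p_{n+1}$ to hold at that state anyway, the conjunct $p_{n+1}$ in the translation is harmless on the relevant paths, and the inductive hypothesis applied to $\phi$ and $\chi$ at the states along the path yields the equivalence. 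The existential case $\exists(\phi \until \chi)$ is symmetric: an until-witnessing path in $\mmodel{M}$ whose $\chi$-state satisfies $p_{n+1}$ can be truncated and looped (using seriality of $\mmodel{M'}$) into a path lying in $\mmodel{M'}$, and conversely. The $\AX$ case is immediate since $(\AX\phi)' = \AX(p_{n+1} \imp \phi')$ and the successors retained in $\mmodel{M'}$ are exactly those satisfying $p_{n+1}$. Applying the claim with $\psi = \vp$ at $x = s_0$, together with $\sat{M}{s_0}{\vp'}$, gives $\sat{M'}{s_0}{\vp}$, completing the proof.
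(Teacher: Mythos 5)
There is a genuine gap, and it sits exactly where you write that ``the conjunct $p_{n+1}$ in the translation is harmless on the relevant paths.'' For the direction $\sat{M'}{x}{\forall(\phi \until \chi)} \Rightarrow \sat{M}{x}{\forall(\phi' \until (p_{n+1} \con \chi'))}$ the relevant paths are \emph{all} paths of $\mmodel{M}$ from $x$, not only those lying in $\mmodel{M'}$. The guard $\Theta$ contains $\EX p_{n+1}$, not $\AX p_{n+1}$, so a state of $\mmodel{M'}$ may have successors outside the $p_{n+1}$-region (this freedom is essential later, when the gadget models are attached); hence there are paths of $\mmodel{M}$ that start in $\mmodel{M'}$ and escape the region. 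By $\Theta$, a $\neg p_{n+1}$-state has only $\neg p_{n+1}$-successors, so an escaping path never returns and can never acquire a witness for $p_{n+1} \con \chi'$ after escaping; if it escapes before reaching a witness, the translated until fails along it even though every path of $\mmodel{M'}$ satisfies $\phi \until \chi$. So your reduction of this case to paths of $\mmodel{M'}$ is unsound. (In the \CTLstar\ proof this problem does not arise, because the clause $\forall(\Box p_{n+1} \imp \alpha')$ makes escaping paths vacuously good; nothing in the \CTL\ clause for $\forall \until$ plays that role. Your $\exists \until$ step also silently uses that all states before the $p_{n+1}$-witness lie in the region---this does follow from $\Theta$---but that step is reparable, whereas the $\forall \until$ step is not.)

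Moreover, no cleverer argument can close this gap, because the claimed inductive equivalence, and indeed the lemma itself for the translation as defined, is false. Take $\vp = \neg\forall(p_1 \until p_2) \con p_1 \con \AX p_2$, so that $n = 2$ and the guard is $p_3$. Then $\vp$ is unsatisfiable: if $p_1$ holds at $s$ and every successor of $s$ satisfies $p_2$, every path from $s$ satisfies $p_1 \until p_2$ with witness at index $0$ or $1$. But $\widehat{\vp} = \Theta \con \neg\forall(p_1 \until (p_3 \con p_2)) \con p_1 \con \AX(p_3 \imp p_2)$ is satisfiable: take states $s_0, t, u$ with $s_0 \ar t$, $s_0 \ar u$, $t \ar t$, $u \ar u$, and $V(p_1) = \{s_0\}$, $V(p_2) = \{u\}$, $V(p_3) = \{s_0, u\}$. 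One checks $\EX p_3 \equivalence p_3$ at $s_0$, $t$, $u$, so $\Theta$ holds at $s_0$; $p_1$ and $\AX(p_3 \imp p_2)$ hold at $s_0$; and the escaping path $s_0, t, t, \ldots$ witnesses $\neg\forall(p_1 \until (p_3 \con p_2))$, since $p_2$ fails at $s_0$ and $p_3$ fails at $t$. A correct clause must make escaping paths irrelevant, e.g.\ by going through the $\exists$-dual:
\[
(\forall(\phi \until \psi))' \; = \; \neg\exists\bigl((p_{n+1} \con \neg\psi') \until (p_{n+1} \con \neg\phi' \con \neg\psi')\bigr) \con \neg\exists\Box\,(p_{n+1} \con \neg\psi'),
\]
where $\exists\Box\,\alpha$ abbreviates $\neg\forall(\top \until \neg\alpha)$; the $\exists$-quantified formulas do relativize correctly, since a guarded witnessing prefix lies entirely in $\mmodel{M'}$ and can be prolonged inside $\mmodel{M'}$ by seriality, and a region path of $\mmodel{M}$ is a path of $\mmodel{M'}$. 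To be fair, the paper's own proof of this lemma is a one-line ``analogous'' remark that glosses over precisely the same point, so the defect is inherited from the paper rather than introduced by you; but as written, your inductive step for $\forall \until$ does not go through, and the statement it is meant to establish needs the translation to be repaired first.
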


\begin{proof}
  Analogous to the proof of Lemma~\ref{lem:varphi-truth}. In the
  right-to-left direction, inductive steps for modal connectives rely
  on the fact that in a submodel we constructed every state makes the
  variable $p_{n+1}$ true.
\end{proof}

\begin{lemma}
  \label{lem:pn+1-ctl}
  If $\widehat{\vp}$ is satisfiable, then it is satisfied in a model
  where $p_{n+1}$ is true at every state.
\end{lemma}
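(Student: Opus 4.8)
The plan is to mirror the proof of Lemma~\ref{lem:pn+1} exactly, since the statement of Lemma~\ref{lem:pn+1-ctl} is the \CTL-analogue of Lemma~\ref{lem:pn+1} and the supporting Lemma~\ref{lem:varphi-truth-ctl} has already been established. First I would invoke Lemma~\ref{lem:varphi-truth-ctl}: since $\widehat{\vp}$ is assumed satisfiable, that lemma tells us $\vp$ is satisfiable. But more is true---the right-to-left direction of the proof of Lemma~\ref{lem:varphi-truth-ctl} (analogous to Lemma~\ref{lem:varphi-truth}) does not merely show $\vp$ is satisfiable in some model; it constructs a submodel $\mmodel{M}'$ in which $p_{n+1}$ is true at every state and shows $\sat{M'}{s}{\vp}$. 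So the real content to extract is that $\vp$ is satisfied in a model where $p_{n+1}$ holds everywhere.

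Next I would take such a model $\mmodel{M} = (\states{S}, \ar, V)$ together with a state $s$ with $\sat{M}{s}{\vp}$ and $p_{n+1}$ true at every state of $\states{S}$. The key observation, carried over verbatim from Lemma~\ref{lem:pn+1}, is that $\vp$ is equivalent to $\widehat{\vp}[p_{n+1}/\top]$ (this equivalence was noted in the text right after the definition of $\widehat{\vp}$, and the \CTL\ definitions of $\cdot'$, $\Theta$, and $\widehat{\vp}$ are set up so the same remark applies). Since $p_{n+1}$ is true at every state of $\mmodel{M}$, evaluating $\widehat{\vp}$ in $\mmodel{M}$ is the same as evaluating $\widehat{\vp}[p_{n+1}/\top]$, which is equivalent to $\vp$; hence $\sat{M}{s}{\widehat{\vp}}$ as well. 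This exhibits $\mmodel{M}$ as the required model in which $p_{n+1}$ holds at every state and which satisfies $\widehat{\vp}$.

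I expect no genuine obstacle here: the entire argument is a transcription of the proof of Lemma~\ref{lem:pn+1}, and the only thing one must check is that the two facts it relies on---namely that the proof of Lemma~\ref{lem:varphi-truth-ctl} yields a model with $p_{n+1}$ true everywhere, and that $\vp \equivalence \widehat{\vp}[p_{n+1}/\top]$---both hold in the \CTL\ setting. The first holds because, as the proof of Lemma~\ref{lem:varphi-truth-ctl} states, its submodel construction forces every state to make $p_{n+1}$ true. The second is immediate from the definitions of the \CTL-translation $\cdot'$ and of $\widehat{\vp}$, since substituting $\top$ for $p_{n+1}$ collapses each guarded clause (e.g.\ $\AX(p_{n+1} \imp \phi')$ becomes $\AX\phi'$, and $(p_{n+1}\con\psi')$ becomes $\psi'$) back to the original operator, recovering $\vp$. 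The proof therefore reads, in its entirety: by Lemma~\ref{lem:varphi-truth-ctl} (right-to-left), $\vp$ is satisfied in a model $\mmodel{M}=(\states{S},\ar,V)$ with $p_{n+1}$ true at every state; since $\vp$ is equivalent to $\widehat{\vp}[p_{n+1}/\top]$ and $p_{n+1}$ holds everywhere in $\mmodel{M}$, we conclude $\sat{M}{s}{\widehat{\vp}}$.
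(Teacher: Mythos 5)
Your proposal is correct and matches the paper's own argument: the paper proves Lemma~\ref{lem:pn+1-ctl} precisely by declaring it analogous to Lemma~\ref{lem:pn+1}, whose proof is exactly the two steps you give---extract from the proof of Lemma~\ref{lem:varphi-truth-ctl} a model of $\vp$ in which $p_{n+1}$ holds at every state, then use the equivalence $\vp \equivalence \widehat{\vp}[p_{n+1}/\top]$ together with the everywhere-truth of $p_{n+1}$ to conclude $\sat{M}{s}{\widehat{\vp}}$. Your explicit check that the substitution $[p_{n+1}/\top]$ collapses the \CTL-guarded clauses back to the original connectives is a detail the paper leaves implicit, but it is exactly the verification the paper's ``analogous'' relies on.
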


\begin{proof}
  Analogous to the proof of Lemma~\ref{lem:pn+1}.
\end{proof}

Next, we model propositional variables $p_1, \ldots, p_{n+1}$ in the
formula $\widehat{\varphi}$ exactly as in the argument for \CTLstar,
i.e., we use formulas $A_m$ and their associated models
$\mmodel{M}_m$, where $m \in \{1, \ldots, n+1\}$.  This can be done
since formulas $A_m$ are, in fact, \CTL-formulas.
Lemma~\ref{lem:roots} can, thus, be reused for \CTL, as well.

We then define a single-variable \CTL-formula $\vp^\ast$ analogously
to the way it had been done for \CTLstar:
$$\varphi^* = \sigma (\widehat{\vp}),$$
where $\sigma$ is a (substitution) function that, for every
$i \in \{1 \ldots n + 1\}$, replaces $p_i$ by $B_i = \EX A_i$.  We can
then prove the analogue of Lemma~\ref{lem:main_lemma_ctl_star}.

\begin{lemma}
  \label{lem:main_lemma_ctl}
  Formula $\varphi$ is satisfiable if, and only if, formula
  $\varphi^*$ is satisfiable.
\end{lemma}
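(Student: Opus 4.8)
The plan is to prove Lemma~\ref{lem:main_lemma_ctl} by mirroring the structure of the proof of Lemma~\ref{lem:main_lemma_ctl_star}, since the substitution $\sigma$, the models $\mmodel{M}_m$, the formulas $A_m$, and the guard $\Theta$ are all identical; only the cases of the inductive argument differ, reflecting the fact that \CTL\ uses composite modal operators $\AX$, $\forall(\cdot\until\cdot)$, $\exists(\cdot\until\cdot)$ rather than separate path quantifiers and temporal connectives.

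For the left-to-right direction I would argue exactly as before: if $\vp$ is satisfiable, then by Lemmas~\ref{lem:varphi-truth-ctl} and~\ref{lem:pn+1-ctl} the formula $\widehat{\vp}$ is satisfied in a model $\mmodel{M} = (\states{S}, \ar, V)$ in which $p_{n+1}$ is true at every state, and (by restricting to the reachable part) we may assume every state is reachable from the witness $s$. I then build $\mmodel{M'}$ by taking the disjoint union of $\mmodel{M}$ with all the models $\mmodel{M}_m \in \cm{M}$, adding an edge from $x \in \states{S}$ to the root $r_m$ of $\mmodel{M}_m$ precisely when $\sat{M}{x}{p_m}$, and setting $x \notin V'(p)$ for every $x \in \states{S}$ while keeping the old valuation of $p$ inside each $\mmodel{M}_m$. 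The goal is to establish \sat{M'}{s}{\varphi^*}; after checking \sat{M'}{s}{\sigma(\Theta)}, it reduces to proving \sat{M}{x}{\psi'} iff \sat{M'}{x}{\sigma(\psi')} for every state subformula $\psi$ of $\vp$ and every $x$ in \mmodel{M}, by induction on $\psi$.

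The key difference lies in the inductive cases for the modal operators, and this is where I expect the real work to be. The atomic case $\psi = p_i$ is handled exactly as for \CTLstar: \sat{M'}{x}{B_i} forces $x \ar^{\mmodel{M'}} r_m$ for some $m$ (since no state of $\states{S}$ satisfies $p$), and Lemma~\ref{lem:roots} pins down $m = i$, so by construction \sat{M}{x}{p_i}. For $\psi = \AX\phi$, where $\psi' = \AX(p_{n+1}\imp\phi')$ and $\sigma(\psi') = \AX(B_{n+1}\imp\sigma(\phi'))$, I would note that the only new successors of $x$ in \mmodel{M'} are roots $r_m$, none of which satisfy $B_{n+1}$ (again by Lemma~\ref{lem:roots} and the construction, no $r_m$ has a successor satisfying $A_{n+1}$ from which $\EX A_{n+1}$ would hold); hence the guard $B_{n+1}$ filters out exactly the spurious successors, so the quantification effectively ranges over the old successors in $\states{S}$, where the inductive hypothesis applies. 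For the until cases $\forall(\phi\until\psi)$ and $\exists(\phi\until\psi)$, the translation places the guard $p_{n+1}$ (resp. $B_{n+1}$) inside the right argument, which forces any witnessing state for the ``eventually'' part to lie in $\states{S}$ rather than among the appended gadget states, and since $p_{n+1}$ holds throughout $\mmodel{M}$ every path staying in $\states{S}$ is unaffected; thus witnessing paths correspond bijectively between \mmodel{M} and \mmodel{M'} and the inductive hypothesis transfers satisfaction of $\phi$ and $\psi$ along them.

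The main obstacle, and the point demanding the most care, is verifying that the appended gadget states cannot interfere with the evaluation of the modal operators---specifically that no root $r_m$ satisfies $B_{n+1}$ and that no until-witness can escape into a $\mmodel{M}_m$. This hinges on the placement of the guard $p_{n+1}$ in the translation together with Lemma~\ref{lem:roots}, ensuring that $B_{n+1}$ is true only at states of $\states{S}$ that satisfy $p_{n+1}$, i.e., all of $\states{S}$, and false everywhere in the gadgets. Once this containment is established the inductive hypothesis applies uniformly, and the right-to-left direction follows from closure of \CTL\ under substitution exactly as in Lemma~\ref{lem:main_lemma_ctl_star}.
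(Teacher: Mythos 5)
You follow the paper's intended argument for this lemma exactly: the paper's own proof is just the remark that everything is analogous to Lemma~\ref{lem:main_lemma_ctl_star}, with the modal cases resting on the claim that $B_{n+1}$ is true precisely at the states coming from the original model $\mmodel{M}$; your construction of $\mmodel{M'}$, the atomic case via Lemma~\ref{lem:roots}, and the role of the guards $p_{n+1}$/$B_{n+1}$ in the $\AX$ and until cases all match. So there is no divergence of approach to report.

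There is, however, a genuine problem with the step you yourself single out as ``demanding the most care,'' and it is present in the paper as well. You claim that no root $r_m$ satisfies $B_{n+1}$, since no $r_m$ has a successor satisfying $A_{n+1}$. This is false for $m = n+1$: every state of the gadget $\mmodel{M}_m$, including its root, carries a self-loop, and by Lemma~\ref{lem:roots} we have $r_{n+1} \models A_{n+1}$; hence $r_{n+1}$ is its own successor satisfying $A_{n+1}$, so $r_{n+1} \models \EX A_{n+1} = B_{n+1}$. Consequently $B_{n+1}$ is \emph{not} true only on $\states{S}$, and the constructed model can genuinely fail to satisfy $\vp^*$. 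Concretely, take $\vp = \AX p_1$ (so $n = 1$) and let $\mmodel{M}$ consist of a single state $s$ with a loop at which $p_1$ and $p_2$ are both true; then $\sat{M}{s}{\widehat{\vp}}$, but in $\mmodel{M'}$ the root $r_2$ is a successor of $s$ that satisfies $B_2$ (via its self-loop) and does not satisfy $B_1$, so $\notsat{M'}{s}{\AX (B_2 \imp B_1)}$, i.e., $\notsat{M'}{s}{\sigma((\AX p_1)')}$. The same escape to $r_{n+1}$ (a path may jump to $r_{n+1}$ and loop there forever while keeping $B_{n+1}$ true) breaks both until cases. The lemma itself remains true, but the construction needs a repair that neither you nor the paper supplies: for instance, define $B_m = \neg p \con \EX A_m$ (every state of $\states{S}$ satisfies $\neg p$, while every root $r_m$ satisfies $p$), or delete the self-loops at the roots $r_m$ (seriality survives because $r_m \ar b^m$, and Lemma~\ref{lem:roots} is unaffected). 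With either repair, $B_{n+1}$ holds exactly on $\states{S}$, and the induction goes through as you describe; without it, your proof---like the paper's---fails at exactly the point you identified as critical.
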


\begin{proof}
  Analogous to the proof of Lemma~\ref{lem:main_lemma_ctl_star}. In
  the left-to-right direction, the inductive steps for the modal
  connectives rely on the fact that the formula $B_{n+1}$ is true
  precisely at the states of the model that satisfies $\vp$.
\end{proof}

\noindent We, thus, obtain the following:

\begin{theorem}
  \label{thr:ctl}
  There exists a polynomial-time computable function $e$ assigning to
  every \CTL-formula $\vp$ a single-variable formula $e(\vp)$ such
  that $e(\vp)$ is satisfiable if, and only if, $\vp$ is satisfiable.
\end{theorem}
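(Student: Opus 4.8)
The plan is to exhibit $e$ explicitly and then discharge two routine obligations: equi-satisfiability and polynomial-time computability. Concretely, I would set $e(\vp) = \vp^*$, where $\vp^*$ is precisely the single-variable \CTL-formula $\sigma(\widehat{\vp})$ constructed immediately above the statement. The equi-satisfiability of $\vp$ and $\vp^*$ is exactly the content of Lemma~\ref{lem:main_lemma_ctl}, so this half of the theorem requires no further work; everything that remains is to check that $\vp^*$ really is a single-variable formula and that the map $\vp \mapsto \vp^*$ can be computed in polynomial time.

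For the single-variable claim, I would first note that $\widehat{\vp} = \Theta \con \vp'$ contains only the variables $p_1, \ldots, p_{n+1}$: the translation $\cdot'$ introduces no variables beyond those already in $\vp$ together with the fresh guard variable $p_{n+1}$, and $\Theta$ mentions only $p_{n+1}$. Since each $A_m$, and hence each $B_m = \EX A_m$, is built (via the formulas $\chi_k$ and $\EX \AG \neg p$) from the single variable $p$, the substitution $\sigma$, which replaces every $p_i$ with $1 \leqslant i \leqslant n+1$ by $B_i$, eliminates all of $p_1, \ldots, p_{n+1}$ and leaves $p$ as the only variable occurring in $\vp^*$.

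For the complexity claim, I would carry out a straightforward size estimate. The translation $\cdot'$ expands each connective by a bounded amount, so $|\vp'| = \bigo{|\vp|}$ and hence $|\widehat{\vp}| = \bigo{|\vp|}$; in particular $n + 1 = \bigo{|\vp|}$. A quick induction on $k$ shows $|\chi_k| = \bigo{k}$, whence $|A_m| = \bigo{m}$ and $|B_m| = \bigo{m} = \bigo{|\vp|}$ for every $m \leqslant n+1$. Since $\widehat{\vp}$ has $\bigo{|\vp|}$ variable occurrences and $\sigma$ replaces each of them by a formula of size $\bigo{|\vp|}$, the result $\vp^* = \sigma(\widehat{\vp})$ has size $\bigo{|\vp|^2}$ and is plainly computable within this bound. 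Taking $e(\vp) = \vp^*$ then yields the theorem.

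I do not anticipate a genuine obstacle here, since the mathematical core has already been isolated in Lemma~\ref{lem:main_lemma_ctl}, whose proof is in turn modelled on Lemma~\ref{lem:main_lemma_ctl_star}. The only point requiring a little care is the polynomial bound: one must confirm that the auxiliary formulas $A_m$ grow only linearly in $m$---which is why the inductive definition of $\chi_k$ is linear rather than branching in a way that would double its size at each step---and that substituting $n+1$ distinct formulas of size $\bigo{|\vp|}$ into a formula of size $\bigo{|\vp|}$ keeps the output quadratic rather than exponential.
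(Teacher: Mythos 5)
Your proposal is correct and matches the paper's own argument exactly: the paper likewise obtains Theorem~\ref{thr:ctl} by taking $e(\vp) = \vp^* = \sigma(\widehat{\vp})$, invoking Lemma~\ref{lem:main_lemma_ctl} for equi-satisfiability, and noting that $\vp^*$ is polynomial-time computable from $\vp$. Your explicit size estimates (linear growth of the $\chi_k$ and $A_m$, quadratic bound on $|\vp^*|$) simply spell out the computability observation the paper leaves implicit.
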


\begin{theorem}
  \label{thr:ctl-complexity}
  The satisfiability problem for the single-variable fragment of \CTL\
  is\/ {\rm EXPTIME}-complete.
\end{theorem}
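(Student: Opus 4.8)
The plan is to derive this as an immediate corollary of the embedding established in Theorem~\ref{thr:ctl}, in exact parallel to the proof of Theorem~\ref{thr:ctlstar-complexity} for \CTLstar. The result splits into a lower bound and an upper bound, and each follows from a single known fact about full \CTL\ together with the reduction just constructed.

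For the lower bound, I would invoke Theorem~\ref{thr:ctl}, which supplies a polynomial-time computable function $e$ mapping every \CTL-formula $\vp$ to a single-variable formula $e(\vp)$ that is satisfiable precisely when $\vp$ is. Composing this reduction with the {\rm EXPTIME}-hardness of satisfiability for full \CTL~\cite{FL79,EH85} immediately yields {\rm EXPTIME}-hardness of the single-variable fragment: any {\rm EXPTIME}-hard problem reducing to \CTL-satisfiability reduces, via $e$, to single-variable \CTL-satisfiability, and polynomial-time reductions compose.

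For the upper bound, I would observe that the single-variable fragment of \CTL\ is syntactically a subset of full \CTL; hence its satisfiability problem is solvable within the {\rm EXPTIME} upper bound known for \CTL-satisfiability~\cite{EH85}. Combining the two bounds gives {\rm EXPTIME}-completeness.

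There is essentially no obstacle remaining at this stage: all the genuine difficulty has been absorbed into Theorem~\ref{thr:ctl}, whose proof (via Lemmas~\ref{lem:varphi-truth-ctl}, \ref{lem:pn+1-ctl}, and~\ref{lem:main_lemma_ctl}) encodes arbitrarily many variables using the single variable $p$ while preserving satisfiability and keeping the translation polynomial in size. The only point that warrants a moment's care is confirming that $e$ is genuinely polynomial-time computable---that the formulas $A_m$, the guard $\Theta$, and the substituted translation $\sigma(\widehat{\vp})$ can all be produced in time polynomial in the size of $\vp$---but this is already noted in the remark preceding Theorem~\ref{thr:ctl} and mirrors the corresponding observation for \CTLstar.
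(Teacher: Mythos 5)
Your proposal is correct and matches the paper's own proof essentially verbatim: the lower bound comes from composing the polynomial-time reduction of Theorem~\ref{thr:ctl} with the {\rm EXPTIME}-hardness of full \CTL~\cite{FL79}, and the upper bound from the fact that single-variable formulas are \CTL-formulas, so the {\rm EXPTIME} decision procedure for \CTL~\cite{EH85} applies. Your added remark on checking polynomial-time computability of $e$ is a sensible (if routine) point that the paper handles by the observation preceding Theorem~\ref{thr:CTLstar}.
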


\begin{proof}
  The lower bound immediately follows from Theorem~\ref{thr:ctl} and
  EXPTIME-hardness of satisfiability for \CTL~\cite{FL79}.  The upper
  bound follows from the EXPTIME upper bound for satisfiability for
  \CTL~\cite{EH85}.
\end{proof}

\section{Alternating-time temporal logics}
\label{sec:atl}

Alternating-time temporal logics \ATLstar\ and \ATL\ can be conceived
of as generalisations of \CTLstar\ and \CTL, respectively.  Their
models incorporate transitions occasioned by simultaneous actions of
the agents in the system rather than abstract transitions, as in
\CTLstar\ and \CTL, and we now reason about paths that can be forced
by cooperative actions of coalitions of agents, rather than just about
all ($\forall$) and some ($\exists$) paths.  We do not lose the
ability to reason about all and some paths in \ATLstar\ and \ATL,
however, so these logics are generalisations of \CTLstar\ and \CTL,
respectively.

The language of ${\bf ATL}^\ast$ contains a non-empty, finite set
$\mathbb{A}\mathbb{G}$ of names of agents (subsets of
$\mathbb{A}\mathbb{G}$ are called coalitions); a countable set
$\Var = \{p_1, p_2, \ldots \}$ of propositional variables; the
propositional constant $\falsehood$; the Boolean connective $\imp$;
coalition quantifiers $\coal{C}$, for every
$C \subseteq \mathbb{A}\mathbb{G}$; and temporal connectives $\next$
(``next''), $\until$ (``until''), and $\Box$ (``always in the
future'').  The language contains two kinds of formulas: state
formulas and path formulas. State formulas $\vp$ and path formulas
$\alpha$ are simultaneously defined by the following BNF expressions:
\[\vp ::= p  \mid \falsehood \mid  (\vp
\imp \vp) \mid \coal{C} \vartheta, \]
\[\vartheta ::= \vp \mid (\vartheta \imp \vartheta) \mid (\vartheta \until \vartheta)
\mid \next \vartheta \mid \Box \vartheta, \]
where $C$ ranges over subsets of $\mathbb{A}\mathbb{G}$ and $p$
ranges over \Var.  Other Boolean and temporal connectives are defined
as for \CTLstar.

Formulas are evaluated in concurrent game models. A concurrent game
model is a tuple
$\mmodel{M} = (\mathbb{A}\mathbb{G}, \states{S}, Act, act, \delta,
V)$, where
\begin{itemize}
\item $\mathbb{A}\mathbb{G} = \{1, \ldots, k\}$ is a finite, non-empty
  set of agents;
\item $\states{S}$ is a non-empty set of states;
\item $Act$ is a non-empty set of actions;
\item $act: \mathbb{A}\mathbb{G} \times \states{S} \mapsto 2^{Act}$ is
  an action manager function assigning a non-empty set of
  ``available'' actions to an agent at a state;
\item $\delta$ is a transition function assigning to every state
  $s \in \states{S}$ and every action profile
  $\alpha = (\alpha_1, \ldots, \alpha_k)$, where
  $\alpha_a \in act( a, s )$, for every $a \in \mathbb{A}\mathbb{G}$,
  an outcome state $\delta(s, \alpha)$;
\item $V$ is a (valuation) function
  $V: \Var \rightarrow 2^{\states{S}}$.
\end{itemize}

A few auxiliary notions need to be introduced for the definition of
the satisfaction relation.

A \textit{path} is an infinite sequence $s_0, s_1, \ldots$ of states
in \mmodel{M} such that, for every $i \geqslant 0$, the following
holds: $s_{i+1} \in \delta(s_i, \alpha)$, for some action profile
$\alpha$.  The set of all such sequences is denoted by
$\states{S}^\omega$.  The notation $\pi[i]$ and $\pi[i, \infty]$ is
used as for \CTLstar.  Initial segments $\pi[0, i]$ of paths are
called \textit{histories}; a typical history is denoted by $h$, and
its last state, $\pi[i]$, is denoted by $last(h)$.  Note that
histories are non-empty sequences of states in $\states{S}$; we denote
the set of all such sequences by $\states{S}^+$.

%

Given $s \in \states{S}$ and $C \subseteq \mathbb{A}\mathbb{G}$, a
$C$-action at $s$ is a tuple $\alpha_C$ such that \linebreak
$\alpha_C(a) \in act(a, s)$, for every $a \in C$, and $\alpha_C(a')$,
for every $a' \notin C$, is an unspecified action of agent $a'$ at $s$
(technically, a $C$-action might be thought of as an equivalence class
on action profiles determined by a vector of chosen actions for every
$a \in C$); we denote by $act( C, s)$ the set of $C$-actions at $s$.
An action profile $\alpha$ extends a $C$-action $\alpha_C$,
symbolically $\alpha_C \sqsubseteq \alpha$, if
$\alpha(a) = \alpha_C(a)$, for every $a \in C$.  The \textit{outcome
  set} of the $C$-action $\alpha_C$ at $s$ is the set of states
$out(s, \alpha_C) = \{ \delta(s, \alpha ) \mid \alpha \in
act(\mathbb{A}\mathbb{G}, s) \mbox{ and } \alpha_C \sqsubseteq
\alpha\}$.

A \textit{strategy} for an agent $a$ is a function
$str_{a} (h) : \states{S}^+ \mapsto act( a, last(h))$ assigning to
every history an action available to $a$ at the last state of the
history.  A $C$-strategy is a tuple of strategies for every $a \in C$.
The function $out(s, \alpha_C)$ can be naturally extended to the
functions $out(s, str_C)$ and $out(h, str_C)$ assigning to a given
state $s$, or more generally a given history $h$, and a given
$C$-strategy the set of states that can result from applying $str_C$
at $s$ or $h$, respectively.  The set of all paths that can result
when the agents in $C$ follow the strategy $str_C$ from a given state
$s$ is denoted by $\Pi(s, str_C)$ and defined as
$\{ \pi \in \states{S}^\omega \mid \pi[0] = s \mbox{ and} \linebreak \pi[j+1]
\in out( \pi[0, j], str_C ), \mbox{ for every } j \geqslant 0 \}$.

The satisfaction relation between models \mmodel{M}, states $s$, and
state formulas $\vp$, as well as between models \mmodel{M}, paths
$\pi$, and path formulas $\vartheta$, is defined as follows:
\begin{itemize}
\item \sat{M}{s}{p_i} \sameas\ $s \in V(p_i)$;
\nopagebreak[3]
\item \sat{M}{s}{\falsehood} never holds;
\nopagebreak[3]
\item \sat{M}{s}{\vp_1 \imp \vp_2} \sameas\ \sat{M}{s}{\vp_1} implies
  \sat{M}{s}{\vp_2}; \nopagebreak[3]
\item \sat{M}{s}{\coal{C} \vartheta_1} \sameas\ there exists a
  $C$-strategy $str_{C}$ such that \sat{M}{\pi}{\vartheta_1} holds for
  every $\pi \in \Pi(s, str_{C})$;  \nopagebreak[3]
\item \sat{M}{\pi}{\vp_1} \sameas\ \sat{M}{\pi[0]}{\vp_1};
  \nopagebreak[3]
\item \sat{M}{\pi}{\vartheta_1 \imp \vartheta_2} \sameas\
  \sat{M}{\pi}{\vartheta_1} implies \sat{M}{\pi}{\vartheta_2};
  \nopagebreak[3]
\item \sat{M}{\pi}{\next \vartheta_1} \sameas\ \sat{M}{\pi[1,
    \infty]}{\vartheta_1}; \nopagebreak[3]
\item \sat{M}{\pi}{\Box \vartheta_1} \sameas\ \sat{M}{\pi[i,
    \infty]}{\vartheta_1}, for every $i \geqslant 0$; \nopagebreak[3]
\item \sat{M}{\pi}{\vartheta_1 \until \vartheta_2} \sameas\
  \sat{M}{\pi[i, \infty]}{\vartheta_2} for some $i \geqslant 0$ and
  \sat{M}{\pi[j, \infty]}{\vartheta_1} for every $j$ such that
  $0 \leqslant j < i$.
\end{itemize}
An \ATLstar-formula is a state formula in this language. An
\ATLstar-formula is satisfiable if it is satisfied by some state of
some model, and valid if it is satisfied by every state of every
model.  Formally, by $\ATLstar$ we mean the set of all valid
\ATLstar-formulas; notice that this set is closed under uniform
substitution.

Logic \ATL\ can be thought of as a fragment of \ATLstar\ containing
only formulas where a coalition quantifier is always paired up with a
temporal connective.  This, as in the case of \CTL, eliminates
path-formulas.  Such composite ``modal'' operators are
$\coal{C} \next$, $\coal{C} \Box$, and $\coal{C} \until$.  Formulas
are defined by the following BNF expression:
\[\vp ::= p  \mid \falsehood \mid  (\vp
\imp \vp) \mid \coal{C} \next \vp \mid \coal{C} \Box \vp \mid \coal{C}
(\vp \until \vp),
\] where $C$ ranges over subsets of $\mathbb{A}\mathbb{G}$ and $p$
ranges over \Var.  The other Boolean connectives and the constant
$\top$ are defined as for \CTL.


The satisfaction relation between concurrent game models \mmodel{M},
states $s$, and formulas $\vp$ is inductively defined as follows (we
only list the cases for the ``new'' modal operators):
\begin{itemize}
\item \sat{M}{s}{\coal{C} \next \vp_1} \sameas\ there exists a
  $C$-action $\alpha_C$ such that \sat{M}{s'}{\vp_1} whenever
  $s' \in out(s, act_C)$;
  \nopagebreak[3]
\item \sat{M}{s}{\coal{C} \Box \vp_1} \sameas\ there exists a
  $C$-strategy $str_C$ such that \sat{M}{\pi[i]}{\vp_1} holds for all
  $\pi \in out(s, str_C)$ and all $i \geqslant 0$;
\nopagebreak[3]  
\item \sat{M}{s}{\coal{C} (\vp_1 \until \vp_2)} \sameas\ there exists a
  $C$-strategy $str_C$ such that, for all $\pi \in out(s, str_C)$,
  there exists $i \geqslant 0$ with \sat{M}{\pi[i]}{\vp} and
  \sat{M}{\pi[j]}{\vp} holds for every $j$ such that $0 \leqslant j < i$.

\end{itemize}
Satisfiable and valid formulas are defined as for \ATLstar.  Formally,
by $\ATL$ we mean the set of all valid \ATLstar-formulas; this set is
closed under substitution.

\begin{remark}
  We have given definitions of satisfiability and validity for
  \ATLstar\ and \ATL\ that assume that the set of all agents
  $\mathbb{A}\mathbb{G}$ present in the language is ``fixed in
  advance''.  At least two other notions of satisfiability (and, thus,
  validity) for these logics have been discussed in the literature
  (see, e.g., \cite{WLWW06})---i.e., satisfiability of a formula in a
  model where the set of all agents coincides with the set of agents
  named in the formula and satisfiability of a formula in a model
  where the set of agents is any set including the agents named in the
  formula (in this case, it suffices to consider all the agents named
  in the formula plus one extra agent).  In what follows, we
  explicitly consider only the notion of satisfiability for a fixed
  set of agents; other notions of satisfiability can be handled in a
  similar way.
\end{remark}

\section{Finite-variable fragments of \ATLstar\ and \ATL}
\label{sec:alt-finite-variable-fragments}

We start by noticing that satisfiability for variable-free fragments
of both \ATLstar\ and \ATL\ is polynomial-time decidable, using the
algorithm similar to the one outlined for \CTLstar\ and \CTL. It
follows that variable-free fragments of \ATLstar\ and \ATL\ cannot be
as expressive as entire logics.

We also notice that, as is well-known, satisfiability for \CTLstar\ is
polynomial-time reducible to satisfiability for \ATLstar\ and
satisfiability for \CTL\ is polynomial-time reducible to
satisfiability for \ATL, using the translation that replaces all
occurrences of $\forall$ by $\coal{\varnothing}$ and all occurrences
of $\exists$ by $\coal{\mathbb{A}\mathbb{G}}$.  Thus,
Theorems~\ref{thr:ctlstar-complexity} and~\ref{thr:ctl-complexity},
together with the known upper bounds~\cite{GD06,WvdH03,Schewe08},
immediately give us the following:

\begin{theorem}
  The satisfiability problem for the single-variable fragment of
  \ATLstar\ is\/ {\rm 2EXPTIME}-complete.
\end{theorem}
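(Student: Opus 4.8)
The plan is to prove the two bounds separately, reusing what is already established. The upper bound is immediate: the single-variable fragment of \ATLstar\ is syntactically a subset of full \ATLstar, so its satisfiability problem is no harder than that of the full logic, which is in 2EXPTIME by~\cite{Schewe08}. No new argument is needed here.

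For the lower bound I would transfer the 2EXPTIME-hardness of the single-variable fragment of \CTLstar\ (Theorem~\ref{thr:ctlstar-complexity}) along the standard embedding of \CTLstar\ into \ATLstar. Let $\tau$ be the translation that leaves the Boolean and temporal connectives untouched and replaces every occurrence of $\forall$ by $\coal{\varnothing}$ and every occurrence of $\exists$ by $\coal{\mathbb{A}\mathbb{G}}$. The observation that makes the finite-variable result go through---and the only genuinely new point---is that $\tau$ acts solely on the path quantifiers and introduces no propositional variables; consequently $\tau$ sends every single-variable \CTLstar-formula to a single-variable \ATLstar-formula, and it is clearly computable in time linear in the length of its input.

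It then remains to verify that $\tau$ preserves satisfiability, i.e., that a \CTLstar-formula $\vp$ is satisfiable if, and only if, $\tau(\vp)$ is. For the direction from $\vp$ to $\tau(\vp)$, given a serial Kripke model of $\vp$ I would build a concurrent game model on the same state set and valuation in which the transition is determined by one designated agent's choice of successor; then the grand coalition $\coal{\mathbb{A}\mathbb{G}}$, which includes that agent, pins down a unique path, while $\coal{\varnothing}$ leaves every path open. For the opposite direction, in any concurrent game model the equivalences $\coal{\varnothing}\vartheta \equivalence \forall \vartheta$ and $\coal{\mathbb{A}\mathbb{G}}\vartheta \equivalence \exists \vartheta$ hold---the empty coalition can exclude no path, whereas a grand-coalition strategy determines a single path---so reading off a one-step relation $s \ar s'$ whenever $s' = \delta(s, \alpha)$ for some action profile $\alpha$ yields a serial Kripke model of $\vp$. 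Composing $\tau$ with Theorem~\ref{thr:ctlstar-complexity} then gives 2EXPTIME-hardness, and together with the upper bound this yields 2EXPTIME-completeness.

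The step I expect to be the main obstacle is precisely this satisfiability-preservation across the change of model class, carried out in accordance with the fixed-agent convention noted in the Remark above: one must confirm that letting a single designated agent determine the transition already realises $\coal{\mathbb{A}\mathbb{G}}$ as existential path quantification, and that seriality of the source relation is matched by totality of $\delta$, so that every state of the constructed game model lies on some path. Although this correspondence is folklore, its details deserve to be checked rather than merely asserted.
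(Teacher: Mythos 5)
Your proposal is correct and follows essentially the same route as the paper: the upper bound is inherited from the known 2EXPTIME bound for full \ATLstar~\cite{Schewe08}, and the lower bound transfers Theorem~\ref{thr:ctlstar-complexity} through the variable-free translation sending $\forall$ to $\coal{\varnothing}$ and $\exists$ to $\coal{\mathbb{A}\mathbb{G}}$. The only difference is one of presentation: the paper invokes this reduction as well known and concludes immediately, whereas you spell out the model-theoretic verification (the designated-agent game model and the equivalences $\coal{\varnothing}\vartheta \equivalence \forall\vartheta$, $\coal{\mathbb{A}\mathbb{G}}\vartheta \equivalence \exists\vartheta$), which is a sound elaboration of the same argument.
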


\begin{theorem}
  The satisfiability problem for the single-variable fragment of \ATL\
  is\/ {\rm EXPTIME}-complete.
\end{theorem}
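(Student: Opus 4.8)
The plan is to establish the two bounds separately, with the upper bound inherited directly from the literature and the lower bound obtained by a reduction from the single-variable fragment of \CTL. Since EXPTIME-completeness amounts to EXPTIME-hardness together with membership in EXPTIME, and the known EXPTIME upper bound for satisfiability of full \ATL~\cite{GD06,WvdH03} applies a fortiori to any syntactic fragment, in particular to the single-variable fragment, the entire burden falls on proving EXPTIME-hardness.

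For the lower bound, I would invoke the standard polynomial-time embedding of \CTL\ into \ATL\ that replaces every occurrence of $\forall$ by $\coal{\varnothing}$ and every occurrence of $\exists$ by $\coal{\mathbb{A}\mathbb{G}}$, leaving the propositional variables as well as the Boolean and temporal connectives untouched. The crucial observation---and the reason this yields a result about the single-variable fragment, rather than merely about the full logic---is that this translation introduces no new propositional variables: it maps a single-variable \CTL-formula to a single-variable \ATL-formula, and clearly does so in polynomial (indeed, linear) time. It therefore restricts to a polynomial-time reduction of satisfiability for the single-variable fragment of \CTL\ to satisfiability for the single-variable fragment of \ATL. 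Since the former is EXPTIME-hard by Theorem~\ref{thr:ctl-complexity}, so is the latter, and combined with the upper bound this gives EXPTIME-completeness.

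The only genuinely semantic step, and hence the place where care is needed, is verifying that this coalition-quantifier translation preserves satisfiability. This reduces to checking two equivalences over concurrent game models: that $\coal{\varnothing} \vartheta$ holds at a state exactly when $\vartheta$ holds along every path issuing from that state (the empty coalition fixes no actions, so $\Pi(s, str_\varnothing)$ is the set of all paths from $s$, matching $\forall$), and that $\coal{\mathbb{A}\mathbb{G}} \vartheta$ holds exactly when $\vartheta$ holds along some such path (a grand-coalition strategy fixes a complete action profile at each step and so, by determinacy of $\delta$, forces a single path, which can be chosen to witness the existential, matching $\exists$). I expect this correspondence---which is well known and underlies the view of \ATL\ as a generalisation of \CTL---to be routine; the main thing to be careful about is to phrase it as a simultaneous induction on state formulas and path formulas, in the same format as the translations used in Section~\ref{sec:single-variable-fragment}, so that the quantifier clauses and the path-formula clauses line up correctly. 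Given this, the reduction is immediate and the theorem follows.
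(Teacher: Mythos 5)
Your proposal is correct and is essentially the paper's own argument: the paper likewise obtains the upper bound from the known EXPTIME bound for full \ATL\ and the lower bound by noting that the standard variable-preserving translation sending $\forall$ to $\coal{\varnothing}$ and $\exists$ to $\coal{\mathbb{A}\mathbb{G}}$ reduces single-variable \CTL\ satisfiability (EXPTIME-hard by Theorem~\ref{thr:ctl-complexity}) to single-variable \ATL\ satisfiability. The only difference is that you spell out the semantic correspondence between the path quantifiers and the empty/grand coalition quantifiers, which the paper simply cites as well known.
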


In the rest of this section, we show that single-variable fragments of
\ATLstar\ and \ATL\ are as expressive as the entire logics by
embedding both \ATLstar\ and \ATL\ into their single-variable
fragments.  The arguments closely resemble the ones for \CTLstar\ and
\CTL, so we only provide enough detail for the reader to be able to
easily fill in the rest.

First, consider \ATLstar.  The translation $\cdot'$
is defined as for \CTLstar, except that the clause for $\forall$ is
replaced by the following:
\begin{center}
\begin{tabular}{llll}
  $(\coal{C} \alpha)'$ & = & $\coal{C} (\Box p_{n+1} \con \alpha')$. &\\
\end{tabular}
\end{center}
Next, we define
$$ \Theta = p_{n+1} \con \coal{\varnothing} \Box
(\coal{\mathbb{A}\mathbb{G}} \next p_{n+1} \equivalence p_{n+1})$$
and $$ \widehat{\vp} = \Theta \con \varphi'.$$ Then, we can prove the
analogues of Lemmas~\ref{lem:varphi-truth} and~\ref{lem:pn+1}.

We next model all the variables of $\widehat{\vp}$ by single-variable
formulas $A'_1, \ldots, A'_m$.  To that end, we use the class of
concurrent game models
$\cm{M} = \{\mmodel{M}'_1, \ldots, \mmodel{M}'_m\}$ that closely
resemble models $\mmodel{M}_1, \ldots, \mmodel{M}_m$ used in the
argument for \CTLstar. For every $\mmodel{M}'_i$, with
$i \in \{1, \ldots, m\}$, the set of states and the valuation $V$ are
the same as for $\mmodel{M}_i$; in addition, whenever $s \ar s'$ holds
in $\mmodel{M}_i$, we set $\delta(s, \alpha ) = s'$, for every action
profile $\alpha$.  The actions available to an agent $a$ at each state
of $\mmodel{M}_i$ are all the actions available to $a$ at any of the
states of the model \mmodel{M} to which we are going to attach models
$\mmodel{M}'_i$ when proving the analogue of
Lemma~\ref{lem:main_lemma_ctl_star}, as well as an extra action $d_a$
that we need to set up transitions from the states of \mmodel{M} to
the roots of $\mmodel{M}'_i$s.

With every $\mmodel{M}'_i$ we associate the formula $A'_i$.  First,
inductively define the sequence of formulas
\begin{center}
  \begin{tabular}{lll}
    $\chi'_0$ & = & $\coal{\varnothing} \Box\, p$; \\
    $\chi'_{k+1}$ & = & $p \con \coal{\mathbb{A}\mathbb{G}} \next (
                        \neg p \con \coal{\mathbb{A}\mathbb{G}} \next \chi_k)$.
  \end{tabular}
  \end{center}
Next, for every $m \in \{1, \ldots, n + 1\}$, let
$$
A'_m = \chi'_m \con \coal{\mathbb{A}\mathbb{G}} \next \coal{\varnothing} \Box\, \neg p.
$$

\begin{lemma}
  \label{lem:roots-atl}
  Let $\mmodel{M}'_k \in \cm{M}$ and let $x$ be a state in
  $\mmodel{M}'_k$. Then, $\mmodel{M}'_k, x \models A'_m$ if, and only
  if, $k = m$ and $x = r_m$.
\end{lemma}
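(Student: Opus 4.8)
The plan is to reduce Lemma~\ref{lem:roots-atl} to the already-established Lemma~\ref{lem:roots} by showing that, over the models $\mmodel{M}'_k$, the two coalition operators occurring in $\chi'_m$ and $A'_m$ behave exactly like the corresponding \CTLstar-operators over $\mmodel{M}_k$. First I would pin down the semantics of these operators in the present models. Since the empty coalition makes no choices, $\Pi(s, str_\varnothing)$ is the set of \emph{all} paths issuing from $s$, so $\coal{\varnothing} \Box\, \psi$ holds at $s$ iff $\psi$ holds at every $\ar$-reachable state, i.e.\ it coincides with $\forall \Box\, \psi$ over $\mmodel{M}_k$. Dually, fixing a strategy for the grand coalition $\mathbb{A}\mathbb{G}$ selects a single outcome at each state, so $\coal{\mathbb{A}\mathbb{G}} \next\, \psi$ holds at $s$ iff some $\ar$-successor of $s$ satisfies $\psi$, i.e.\ it coincides with $\exists \next\, \psi$. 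The only thing to verify here is that the game structure attached to $\mmodel{M}_k$ faithfully realises $\ar$, namely that $\{ \delta(s, \alpha) : \alpha \text{ an action profile}\}$ is exactly the set of $\ar$-successors of $s$; this is how $\delta$ is defined from $\ar$.

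Granting this, a routine simultaneous induction on formula structure shows that for every state $x$ of $\mmodel{M}'_k$ and every subformula $\psi$ of $\chi'_m$ or $A'_m$ we have $\mmodel{M}'_k, x \models \psi$ iff $\mmodel{M}_k, x \models \widetilde{\psi}$, where $\widetilde{\psi}$ is obtained by replacing $\coal{\varnothing} \Box$ with $\forall \Box$ and $\coal{\mathbb{A}\mathbb{G}} \next$ with $\exists \next$. Under this replacement $\chi'_m$ becomes $\chi_m$ and $A'_m$ becomes $A_m$, so I could then simply invoke Lemma~\ref{lem:roots}: $\mmodel{M}'_k, x \models A'_m$ iff $\mmodel{M}_k, x \models A_m$ iff $k = m$ and $x = r_m$, which is exactly the claim.

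If one prefers a self-contained computation (essentially reproving Lemma~\ref{lem:roots} in the game setting), the key step is the chain-counting claim: for $0 \leqslant j \leqslant k$, the formula $\chi'_j$ is satisfied in $\mmodel{M}'_k$ exactly at $a^k_{2(k-j)}$ (with the convention $a^k_0 = r_k$), and for $j > k$ nowhere. This goes by induction on $j$: the base case uses that $a^k_{2k}$ is the unique state all of whose reachable states satisfy $p$; the inductive step uses that the only $\neg p$-state that can step to the previous witness $a^k_{2(k-j)}$ is its chain-predecessor, whose only $p$-predecessor is $a^k_{2(k-j-1)}$. One then checks that the second conjunct $\coal{\mathbb{A}\mathbb{G}} \next \coal{\varnothing} \Box\, \neg p$ holds exactly at $r_k$ and $b^k$, since $b^k$ is the unique state from which $\neg p$ holds forever and its only predecessors are $r_k$ and itself. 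Combining, $A'_m$ holds nowhere when $m > k$ (no witness for $\chi'_m$) or $m < k$ (the witness $a^k_{2(k-m)}$ fails the second conjunct), and holds exactly at $r_k$ when $m = k$.

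The main obstacle is the first, semantic step: confirming that $\coal{\varnothing} \Box$ and $\coal{\mathbb{A}\mathbb{G}} \next$ collapse to $\forall \Box$ and $\exists \next$ over these particular game models. This rests entirely on the (somewhat tersely specified) transition structure $\delta$, so care is needed to check that the grand coalition can force each $\ar$-successor and no other state, and that the empty coalition constrains nothing. Once that is settled, everything reduces to the finite chain analysis already carried out for Lemma~\ref{lem:roots}.
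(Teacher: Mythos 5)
Your proof is correct and fills in exactly what the paper leaves to the reader: the paper's own proof of this lemma is just ``Straightforward,'' and the intended content is precisely your observation that over the models $\mmodel{M}'_k$ the operators $\coal{\varnothing}\Box$ and $\coal{\mathbb{A}\mathbb{G}}\next$ collapse to $\forall\Box$ and $\exists\next$, so that $A'_m$ behaves like $A_m$ and the claim follows from Lemma~\ref{lem:roots} (or, equivalently, from redoing the finite chain computation directly). The one caveat you rightly flag---that $\delta$ must realise exactly the $\ar$-successors, which the paper's definition of $\mmodel{M}'_i$ states only tersely---is the correct charitable reading, and with it your argument goes through.
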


\begin{proof}
  Straightforward.
\end{proof}

\noindent Now, for every $m \in \{1, \ldots, n+1\}$, define
$$B'_m = \coal{\mathbb{A}\mathbb{G}} \next A'_m.$$ Finally, let $\sigma$ be a (substitution) function
that, for every $i \in \{1, \ldots, n + 1\}$, replaces $p_i$ by
$B'_i$, and let
$$\varphi^* = \sigma (\widehat{\vp}).$$
This allows us to prove the analogue of Lemma~\ref{lem:main_lemma_ctl_star}.
\begin{lemma}
  \label{lem:main_lemma_atl_star}
  Formula $\varphi$ is satisfiable if, and only if, formula
  $\varphi^*$ is satisfiable.
\end{lemma}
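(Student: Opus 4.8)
The plan is to mirror the proof of Lemma~\ref{lem:main_lemma_ctl_star} line for line, replacing the path quantifiers of \CTLstar\ by the coalition quantifiers of \ATLstar. The right-to-left half is purely syntactic and needs no model construction: if $\vp$ is not satisfiable then, by the \ATLstar-analogues of Lemmas~\ref{lem:varphi-truth} and~\ref{lem:pn+1}, neither is $\widehat{\vp}$, so $\neg\widehat{\vp} \in \ATLstar$; since $\ATLstar$ is closed under uniform substitution and $\vp^* = \sigma(\widehat{\vp})$, we obtain $\neg\vp^* = \sigma(\neg\widehat{\vp}) \in \ATLstar$, whence $\vp^*$ is unsatisfiable.

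For the left-to-right half, assume $\vp$ is satisfiable. By the \ATLstar-analogues of Lemmas~\ref{lem:varphi-truth} and~\ref{lem:pn+1}, $\widehat{\vp}$ is satisfied at a state $s$ of a concurrent game model $\mmodel{M} = (\mathbb{A}\mathbb{G}, \states{S}, Act, act, \delta, V)$ in which $p_{n+1}$ is true everywhere, and we may assume every state of $\mmodel{M}$ is reachable from $s$. I would then assemble a model $\mmodel{N}$ exactly as in the \CTLstar\ case: take the disjoint union of $\mmodel{M}$ with the game models $\mmodel{M}'_1, \ldots, \mmodel{M}'_{n+1} \in \cm{M}$ over the common agent set $\mathbb{A}\mathbb{G}$, and, using the auxiliary actions $d_a$, install from each $x \in \states{S}$ an action profile leading to the root $r_m$ of $\mmodel{M}'_m$ precisely when $\sat{M}{x}{p_m}$; the valuation of $p$ agrees with that of $\mmodel{M}'_m$ on each gadget and makes $p$ false on all of $\states{S}$. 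After verifying $\sat{N}{s}{\sigma(\Theta)}$, the goal is the usual equivalence, proved by simultaneous induction: $\sat{M}{x}{\psi'}$ iff $\sat{N}{x}{\sigma(\psi')}$ for every state subformula $\psi$ and $x \in \states{S}$, and $\sat{M}{\pi}{\alpha'}$ iff $\sat{N}{\pi}{\sigma(\alpha')}$ for every path subformula $\alpha$ and path $\pi$ of $\mmodel{M}$.

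The base case $\psi = p_i$, where $\sigma(\psi') = B'_i = \coal{\mathbb{A}\mathbb{G}} \next A'_i$, runs as before: because the grand coalition controls every action, $\sat{N}{x}{B'_i}$ holds iff some action profile takes $x$ to a state satisfying $A'_i$, which by Lemma~\ref{lem:roots-atl} happens iff that state is $r_i$, i.e.\ iff $\sat{M}{x}{p_i}$ by construction; since $p$ is false throughout $\states{S}$, no state of $\mmodel{M}$ can itself satisfy any $A'_i$. The Boolean cases and the cases for $\next$ and $\until$ are routine homomorphic steps.

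The heart of the argument, and the step I expect to require the most care, is the coalition-quantifier case $\psi = \coal{C}\alpha$, where $\psi' = \coal{C}(\Box p_{n+1} \con \alpha')$ and $\sigma(\psi') = \coal{C}(\Box B'_{n+1} \con \sigma(\alpha'))$. Unlike \CTLstar, the semantics of $\coal{C}$ quantifies existentially over a $C$-strategy and only then universally over all compatible paths, so one must transfer whole $C$-strategies between $\mmodel{M}$ and $\mmodel{N}$ rather than match paths one at a time. The governing fact---the exact analogue of the observation used in the \CTL\ case---is that $B'_{n+1}$ is true precisely at the states of $\mmodel{M}$, so the guard $\Box B'_{n+1}$ pins every compatible path down to the submodel $\mmodel{M}$ (where $\Box p_{n+1}$ held vacuously). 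A $C$-strategy witnessing $\sat{M}{x}{\coal{C}(\Box p_{n+1} \con \alpha')}$ then lifts to a $C$-strategy in $\mmodel{N}$ whose guarded compatible paths are exactly the original ones, to which the inductive hypothesis applies; conversely, any $C$-strategy realizing $\Box B'_{n+1} \con \sigma(\alpha')$ in $\mmodel{N}$ confines its compatible paths to $\mmodel{M}$, so again the hypothesis applies. The delicate point is to check that the auxiliary $d_a$-transitions into the roots cannot be exploited by the complementary coalition to break a guarded path out of $\mmodel{M}$, i.e.\ that $C$ retains in $\mmodel{N}$ exactly the control over the evolution of guarded paths that it had in $\mmodel{M}$; pinning this down carefully is the one place where the \ATLstar\ argument genuinely differs from, and is more subtle than, its \CTLstar\ counterpart.
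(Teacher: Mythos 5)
Your plan is indeed the paper's own plan---the paper's proof of this lemma consists of little more than the phrase ``analogous to the proof of Lemma~\ref{lem:main_lemma_ctl_star}'' plus the $d_a$-construction you describe---and your substitution half and your base case are fine. The problem is that the ``governing fact'' on which your coalition case rests is false, and the point you defer as ``delicate'' is exactly where the argument fails. First, $B'_{n+1}$ is \emph{not} true precisely at the states of $\mmodel{M}$: in $\mmodel{N}$ the root $r_{n+1}$ keeps its self-loop and, the gadget being a generated submodel, still satisfies $A'_{n+1}$; hence $r_{n+1}$ satisfies $B'_{n+1} = \coal{\mathbb{A}\mathbb{G}}\next A'_{n+1}$, and a path that jumps from $\mmodel{M}$ to $r_{n+1}$ and loops there forever satisfies $\Box B'_{n+1}$ without being a path of $\mmodel{M}$. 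For $C = \mathbb{A}\mathbb{G}$, the strategy ``jump to $r_{n+1}$ and loop'' can therefore witness $\coal{\mathbb{A}\mathbb{G}}(\Box B'_{n+1} \con \sigma(\alpha'))$ at a state where $\coal{\mathbb{A}\mathbb{G}}(\Box p_{n+1} \con \alpha')$ fails in $\mmodel{M}$, so the right-to-left half of your inductive step breaks.

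Second, and fatally, you never separate the case $C = \varnothing$, where the guard sits as a \emph{conjunct} under a purely universal quantifier. The empty coalition has only the trivial strategy, whose compatible paths are \emph{all} paths of $\mmodel{N}$; among these are paths that enter a gadget and reach a state such as $b^{n+1}$ where $B'_{n+1}$ fails (every state of $\mmodel{M}$ has the all-$d$ transition to $r_{n+1}$, and from $r_{n+1}$ the path may move to $b^{n+1}$). With the conjunction of the paper's translation---unlike the implication used for \CTLstar---such paths are not excused but falsify the formula, so $\coal{\varnothing}(\Box B'_{n+1}\con\sigma(\alpha'))$ is false at every state of $\mmodel{M}$ inside $\mmodel{N}$, even though $\coal{\varnothing}(\Box p_{n+1}\con\alpha')$ may hold in $\mmodel{M}$; your phrase ``guarded compatible paths are exactly the original ones'' silently reads the conjunction as an implication. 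Nor can more careful bookkeeping with the same construction repair this: $\sigma((\coal{\varnothing}\alpha)')$ is unsatisfiable outright, because if $B'_{n+1}$ held at every state reachable from some state, then via the conjunct $\coal{\mathbb{A}\mathbb{G}}\next\coal{\varnothing}\Box\neg p$ of $A'_{n+1}$ one reaches a state $u$ all of whose reachable states satisfy $\neg p$, while $u \models B'_{n+1}$ forces a successor of $u$ satisfying $A'_{n+1}$ and hence $p$. A correct proof therefore has to deviate from a line-for-line transcription: for instance, translate $\coal{\varnothing}$ (which is just $\forall$) with an implication guard as in \CTLstar, and make the guard genuinely confine paths to $\mmodel{M}$, say by strengthening $B'_m$ to $\neg p \con \coal{\mathbb{A}\mathbb{G}}\next A'_m$ or by deleting the roots' self-loops. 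In fairness, these defects are inherited from the paper's own two-line proof and its translation clause; but your write-up asserts the false claims explicitly, and the step you flag as requiring care is precisely the one that cannot be carried out as stated.
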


\begin{proof}
  Analogous to the proof of Lemma~\ref{lem:main_lemma_ctl_star}.  When
  constructing the model $\mmodel{M}'$, whenever we need to connect a
  state $s$ in $\mmodel{M}$ to the root $r_i$ of $\mmodel{M}'_i$, we
  make an extra action, $d_a$, available to every agent $a$, and
  define
  $\delta(s, \langle d_a \rangle_{a \in \mathbb{A}\mathbb{G}}) = r_i$.
\end{proof}

Thus, we have the following:
\begin{theorem}
  \label{thr:ATLstar}
  There exists a polynomial-time computable function $e$ assigning to
  every \ATLstar-formula $\vp$ a single-variable formula $e(\vp)$ such
  that $e(\vp)$ is satisfiable if, and only if, $\vp$ is satisfiable.
\end{theorem}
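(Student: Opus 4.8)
The plan is to take $e$ to be precisely the map $\vp \mapsto \vp^\ast$ constructed above, observing that all the substantive work has already been discharged in the preceding lemmas. Concretely, I would set $e(\vp) = \vp^\ast = \sigma(\widehat{\vp})$, where $\widehat{\vp} = \Theta \con \vp'$ and $\sigma$ replaces each $p_i$ by $B'_i = \coal{\mathbb{A}\mathbb{G}} \next A'_i$. The required satisfiability equivalence---that $e(\vp)$ is satisfiable if, and only if, $\vp$ is satisfiable---is then exactly the content of Lemma~\ref{lem:main_lemma_atl_star}, so nothing further need be argued on that front. This mirrors the way Theorem~\ref{thr:CTLstar} was obtained from Lemma~\ref{lem:main_lemma_ctl_star}.

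It then remains to verify the two syntactic properties asserted in the statement: that $e(\vp)$ is a single-variable formula, and that it is polynomial-time computable from $\vp$. For the first, I would note that the translation $\cdot'$ introduces no propositional variable other than $p_{n+1}$, that $\Theta$ is built solely from $p_{n+1}$, and that every replacement formula $B'_i$ used by $\sigma$ is assembled from $A'_i$ and the $\chi'_k$, all of which contain only the single variable $p$. Hence, after applying $\sigma$, the only variable surviving in $\vp^\ast$ is $p$.

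For polynomial-time computability, I would bound the size of $\vp^\ast$. The translation $\cdot'$ is linear, replacing each connective by a fixed-size pattern, so $|\vp'| = \bigo{|\vp|}$ and therefore $|\widehat{\vp}| = \bigo{|\vp|}$. The inductive definition of $\chi'_k$ adds only a constant number of symbols at each step, giving $|\chi'_k| = \bigo{k}$ and hence $|A'_i| = |B'_i| = \bigo{n} = \bigo{|\vp|}$, since $n \leqslant |\vp|$. As $\widehat{\vp}$ contains $\bigo{|\vp|}$ occurrences of variables and each is replaced by a formula of size $\bigo{|\vp|}$, the formula $\vp^\ast = \sigma(\widehat{\vp})$ has size $\bigo{|\vp|^2}$ and is plainly computable within that bound.

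The only point warranting any care is that the size of the ``yardstick'' formulas $A'_i$ grows with the index $i$, so one must confirm that this growth stays inside the quadratic envelope above; since the largest index used is $n+1$ and $n$ is bounded by the length of $\vp$, this is immediate. I expect no genuine obstacle in this theorem: the real mathematical content---the correctness of modelling each variable $p_i$ by $B'_i$ against the attached concurrent game models $\mmodel{M}'_i$, and in particular the root-identification property---has already been established in Lemmas~\ref{lem:roots-atl} and~\ref{lem:main_lemma_atl_star}, and the present statement is merely the packaging of those results together with the size estimate.
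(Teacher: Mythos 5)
Your proposal is correct and takes essentially the same route as the paper: Theorem~\ref{thr:ATLstar} is obtained by combining Lemma~\ref{lem:main_lemma_atl_star} with the observation that $\vp^\ast = \sigma(\widehat{\vp})$ is a single-variable formula computable in polynomial (indeed quadratic) time from $\vp$, exactly mirroring how Theorem~\ref{thr:CTLstar} was derived from Lemma~\ref{lem:main_lemma_ctl_star}. Your explicit size bounds on $\chi'_k$, $A'_i$, and $B'_i$ simply spell out details the paper leaves implicit.
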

We then can adapt the argument for \ATL\ form the one just presented
in the same way we adapted the argument for \CTL\ from the one for
\CTLstar, obtaining the following:
\begin{theorem}
  \label{thr:ATL}
  There exists a polynomial-time computable function $e$ assigning to
  every \ATL-formula $\vp$ a single-variable formula $e(\vp)$ such
  that $e(\vp)$ is satisfiable if, and only if, $\vp$ is satisfiable.
\end{theorem}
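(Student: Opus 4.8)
The plan is to transport the \ATLstar\ construction of Theorem~\ref{thr:ATLstar} to \ATL\ exactly as the \CTLstar\ argument was transported to \CTL\ in Section~\ref{sec:single-variable-fragment}. Since \ATL\ has no path formulas, the guard $\Box p_{n+1}$ that the \ATLstar-translation places inside a coalition quantifier must be folded into the composite operators $\coal{C}\next$, $\coal{C}\Box$, and $\coal{C}(\,\cdot\until\cdot\,)$. Keeping $\cdot'$ as the identity on $p_i$ and $\bottom$ and letting it commute with $\imp$, I would set
\begin{center}
\begin{tabular}{lll}
$(\coal{C}\next\phi)'$ & $=$ & $\coal{C}\next(p_{n+1}\imp\phi')$; \\
$(\coal{C}\Box\phi)'$ & $=$ & $\coal{C}\Box(p_{n+1}\imp\phi')$; \\
$(\coal{C}(\phi\until\psi))'$ & $=$ & $\coal{C}(\phi'\until(p_{n+1}\con\psi'))$,
\end{tabular}
\end{center}
in exact parallel with the \CTL-clauses for $\AX$ and for $\forall\,(\cdot\until\cdot)$; the guard $\Theta=p_{n+1}\con\coal{\varnothing}\Box(\coal{\mathbb{A}\mathbb{G}}\next p_{n+1}\equivalence p_{n+1})$ and $\widehat{\vp}=\Theta\con\vp'$ are carried over verbatim. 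I would then reuse, without change, the gadget models $\mmodel{M}'_i$, the formulas $A'_i$ and $B'_i=\coal{\mathbb{A}\mathbb{G}}\next A'_i$, the substitution $\sigma$, and $\vp^*=\sigma(\widehat{\vp})$ from the \ATLstar\ development; this is legitimate because each $A'_i$ is already an \ATL-formula, so Lemma~\ref{lem:roots-atl} applies as stated.

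Next I would prove the \ATL-analogues of Lemmas~\ref{lem:varphi-truth} and~\ref{lem:pn+1} and then of Lemma~\ref{lem:main_lemma_atl_star}. The first two go through as for \CTL\ (Lemmas~\ref{lem:varphi-truth-ctl} and~\ref{lem:pn+1-ctl}): the left-to-right directions invoke closure of \ATL\ under substitution together with $\widehat{\vp}[p_{n+1}/\top]\equivalence\vp$, while the right-to-left direction of the first passes to the smallest submodel generated from the satisfying state by following only transitions into $p_{n+1}$-states, in which $p_{n+1}$ then holds everywhere. For the main lemma I would build $\mmodel{M}'$ as in Lemma~\ref{lem:main_lemma_atl_star}, taking the disjoint union of a model of $\widehat{\vp}$ (with $p_{n+1}$ true everywhere) and the $\mmodel{M}'_i$, and adding for every agent a fresh action $d_a$ with $\delta(s,\langle d_a\rangle_{a\in\mathbb{A}\mathbb{G}})=r_i$ whenever \sat{M}{s}{p_i}, with $p$ false on the states of the original model. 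The equivalence \sat{M}{x}{\psi}\ iff \sat{M'}{x}{\sigma(\psi')}, for every state $x$ and state subformula $\psi$ of $\vp$, then follows by induction on the structure of $\psi$, the organising fact being, exactly as recorded for \CTL\ in the proof of Lemma~\ref{lem:main_lemma_ctl}, that $B'_{n+1}$ holds precisely at the states of the original model.

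The step I expect to carry the real weight is the inductive case for the strategic until $\coal{C}(\cdot\until\cdot)$, which replaces the path-quantifier reasoning of the \CTL\ proof by a genuinely game-theoretic argument and where the escape actions $d_a$ create outcome paths that leave the original model. I must show, in one direction, that a witnessing $C$-strategy in $\mmodel{M}$ lifts to a $C$-strategy in $\mmodel{M}'$ that never plays $d_a$, so that all its outcome paths stay inside the original model and reach the eventuality there; and, conversely, that the region guard forces any witnessing $C$-strategy in $\mmodel{M}'$ to behave like an $\mmodel{M}$-strategy, since the endpoint $p_{n+1}\con\psi'$ can only be met at states satisfying $B'_{n+1}$, i.e.\ at states of the original model. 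The delicate point is the interaction with the agents outside $C$: because a transition to a root $r_i$ requires the unanimous action $\langle d_a\rangle_{a\in\mathbb{A}\mathbb{G}}$, any non-empty coalition $C$ can veto an escape and keep its forced paths in the region, whereas for the empty coalition every escaping outcome immediately enters a gadget state at which $B'_{n+1}$, and hence the eventuality $p_{n+1}\con\psi'$, fails. Confirming that such escaping paths cannot spuriously falsify the translated until is the crux of the whole argument, and it is here that the liveness nature of $\until$ makes the verification more demanding than the safety cases $\coal{C}\next$ and $\coal{C}\Box$, whose $\imp$-guards render escaping outcomes vacuous; I would discharge it by fixing the definition of $\delta$ on the non-unanimous action profiles so that both facts above hold at once. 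Once this verification is in place, $\vp^*$ is a single-variable formula computable from $\vp$ in polynomial time, and setting $e(\vp)=\vp^*$ yields Theorem~\ref{thr:ATL}.
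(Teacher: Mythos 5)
Your proposal takes essentially the same approach as the paper: the paper's proof of Theorem~\ref{thr:ATL} consists precisely of the instruction to transfer the \ATLstar\ construction of Section~\ref{sec:alt-finite-variable-fragments} to \ATL\ in the way the \CTLstar\ argument of Section~\ref{sec:single-variable-fragment} was transferred to \CTL, and your translation clauses (implication guards under $\coal{C}\next$ and $\coal{C}\Box$, which agree with what the \CTL\ clauses for $\AX$ and for the defined $\forall\Box$ induce, and the conjunctive guard on the until eventuality), the guard $\Theta$, the reused gadget models $\mmodel{M}'_i$ with their formulas $A'_i$ and $B'_i$, the substitution $\sigma$, and the $d_a$-based attachment of gadgets to the satisfying model are exactly that transfer. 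Your added discussion of which coalitions can force or veto escape transitions, and of how $\delta$ must be fixed on non-unanimous action profiles, only fills in verification details that the paper leaves implicit; it does not constitute a different route.
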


\section{Discussion}
\label{sec:conclusion}

We have shown that logics \CTLstar, \CTL, \ATLstar, and \ATL\ can be
polynomial-time embedded into their single-variable fragments; i.e.,
their single-variable fragments are as expressive as the entire
logics.  Consequently, for these logics, satisfiability is as
computationally hard when one considers only formulas of one variable
as when one considers arbitrary formulas.  Thus, the complexity of
satisfiability for these logics cannot be reduced by restricting the
number of variables allowed in the construction of formulas.

The technique presented in this paper can be applied to many other
modal and temporal logics of computation considered in the literature.
We will not here attempt a comprehensive list, but rather mention a
few examples.

The proofs presented in this paper can be extended in a rather
straightforward way to Branching- and Alternating-time
temporal-epistemic logics~\cite{HV89,WvdH03,Walther05,GorSh09a}, i.e.,
logics that enrich the logics considered in this paper with the
epistemic operators of individual, distributed, and common knowledge
for the agents.  Our approach can be used to show that single-variable
fragments of those logics are as expressive as the entire logics and
that, consequently, the complexity of satisfiability for them is as
hard (EXPTIME-hard or 2EXPTIME-hard) as for the entire logics.
Clearly, the same approach can be applied to epistemic
logics~\cite{FHMV95,GSh08,GSh09}, i.e., logics containing epistemic,
but not temporal, operators---such logics are widely used for
reasoning about distributed computation.  Our argument also applies to
logics with the so-called universal modality~\cite{GorPassy89} to
obtain EXPTIME-completeness of their variable-free fragments. The
technique presented here has also been recently used~\cite{RSh18a} to
show that propositional dynamic logics are as expressive in the
language without propositional variables as in the language with an
infinite supply of propositional variables. Since our method is
modular in the way it tackles modalities present in the language, it
naturally lends itself to modal languages combining various
modalities---a trend that has been gaining prominence for some time
now.

The technique presented in this paper can also be lifted to
first-order languages to prove undecidability results about fragments
of first-order modal and related logics,---see~\cite{RSh18c}.

We conclude by noticing that, while we have been able to overcome the
limitations of the technique from~\cite{Halpern95} described in the
introduction, our modification thereof has limitations of its own.  It
is not applicable to logics whose semantics forbids branching, such as
{\bf LTL} or temporal-epistemic logics of linear
time~\cite{HV89,GSh09b}.  Our technique cannot be used, either, to
show that finite-variable fragments of logical systems that are not
closed under uniform substitution---such as public announcement logic
{\bf PAL}~\cite{Plaza89,DHK08}---have the same expressive power as the
entire system.  This does not preclude it from being used in
establishing complexity results for finite-variable fragments of such
systems provided they contain fragments, as is the case with {\bf
  PAL}~\cite{Lutz06}, that are closed under substitution and have the
same complexity as the entire system.


\bibliographystyle{plain}

\end{document}